\newtheorem{theorem}{Theorem}
\newtheorem{lemma}{Lemma}
\newtheorem{definition}{Definition}
\newtheorem{corollary}{Corollary}
\newtheorem{assumption}{Assumption}
\newtheorem{proposition}{Proposition}
\title{Beyond Discretization: A Continuous-Time Framework for Event Generation in Neuromorphic Pixels}
\author[1]{Aaron J. Hendrickson}
\author[2]{David P. Haefner}
\affil[1]{NAWCAD DAiTA Group, 23013 Cedar Point Road, Patuxent River, MD USA}
\affil[2]{DEVCOM C5ISR Center, 10221 Burbeck Road, Fort Belvoir, VA USA}
\date{\today}
\begin{document}

\maketitle

\begin{abstract}
A novel continuous-time framework is proposed for modeling neuromorphic image sensors in the form of an initial canonical representation with analytical tractability. Exact simulation algorithms are developed in parallel with closed-form expressions that characterize the model’s dynamics. This framework enables the generation of synthetic event streams in genuine continuous-time, which combined with the analytical results, reveal the underlying mechanisms driving the oscillatory behavior of event data presented in the literature.
\end{abstract}


\section{Introduction}

\begin{table}[b]\footnotesize\hrule\vspace{1mm}
Send correspondence to A.J.H.\\
E-mail: aaron.j.hendrickson2.civ@us.navy.mil\\
$^\ast$This research was supported by [funding line] under Project No.~[project \#]. 
\end{table}

Neuromorphic ``event-based'' sensors represent a novel sensing paradigm that departs from the frame-based modality of traditional imaging systems. Instead of producing gray values at fixed discrete points in time, event pixels operate in a continuous-time manner, reporting time-stamped on/off events only when changes in scene intensity exceed an internal reference by a preset threshold. Upon such an exceedance, the internal reference level is updated to reflect the new scene intensity. This approach, designed to mimic the human retina, originates from the neuromorphic engineering field established by researchers at Caltech in the 1980s and has since evolved into an area of both active research and commercial development \cite{McReynolds_thesis}. Despite the growing interest in event-based sensors, frame-based sensors remain the dominant imaging modality backed by decades of refinement and model development. Understanding the historical context of this progression is essential, as it serves as a benchmark against which the analogous efforts pertaining to event-based sensors is still developing.

Frame-based sensors have a well-established and rich history of modeling and metrology, dating back to the 1970s. The relative simplicity of the analog circuitry in frame-based pixels combined with a lack of statistical dependence between frames gives rise to a simple Poisson-Gaussian noise model which accurately describes internal noise processes from only a few fundamental parameters: offset, read noise, shot noise, and conversion gain. The simplicity of this foundational model has thus enabled various methods of inference---characterization techniques---from the Photon Transfer method \cite{janesick_2007,Beecken:96} to modern approaches adapted to the deep sub-electron read noise regime \cite{starkey_2016,Hendrickson_2024}.

In contrast, the event-driven nature of event-based pixels inherently induces memory effects, meaning that the statistical dependence between events causes noise characteristics to evolve dynamically over time. This gives rise to a transient nature which invoke the need for a Markovian description of their dynamics in a continuous-time framework. Further complicating matters, the complex analog circuitry of these devices exhibit non-linear and signal-dependent behavior with many different sources of noise that are not directly observable from the events they produce.

Given these challenges, stochastic modeling of event-based sensors is still in the early stages of model development, with most efforts focusing on discrete-time approximations \cite{Joubert_2021,Kaiser_2016,ESIM_2018}. Over time, the proposed models have increased in complexity to include logarithmic compression, first- and second-order low-pass filtering with signal-dependent bandwidths, and leakage effects \cite{v2e_2021}. However, these discrete-time approaches fundamentally limit the ability to capture the true nature of event-based sensors and introduce artifacts caused by discretization. By transitioning to a continuous-time framework, we not only preserve the inherent nature of these devices and remove discretization artifacts, but also gain access to a breadth of mathematical tools developed over the past decades to manipulate continuous-time processes.

To address these limitations, we propose a shift toward a continuous-time framework beginning with a simple canonical representation that retains only the essential mechanisms of event generation. By eliminating unnecessary complexities while preserving fundamental mechanisms, this model enables the application of established mathematical tools, allowing for closed-form analytical results and deeper theoretical insights, driving scientific understanding. In doing so, we aim to begin bridging the gap between experimental event data and formal mathematical models, laying the groundwork for future characterization methods.

As such, this work establishes a foundation for continuous-time event pixel modeling and simulation. The key contributions and organization of the paper are as follows:
\begin{itemize}[noitemsep]
\item Section \ref{eq:canonical_model_statement} introduces a canonical stochastic model for event generation under DC illumination, formulated in the language of It\^{o} diffusion processes with only four effective parameters.
\item Section \ref{sec:key_theoretical_contributions} summarizes the key theoretical conclusions of the canonical form, with full proofs provided in Appendices \ref{sec:asymptotic normality_proof}, \ref{sec:OU_process_exit_stats_theoretical}, and \ref{sec:event_stream_statistics}.
\item Section \ref{sec:simulation_algorithms} presents simulation algorithms leveraging recent advances the simulation of diffusion exit statistics\cite{Herrmann_2022}, enabling exact generation of synthetic event streams.
\item Section \ref{sec:simulation_examples} validates the model by demonstrating that it reproduces key statistical characteristics observed in experimental event data and explains the underlying mechanisms behind their emergence.
\item Finally, we conclude with further discussion on event sensor characterization challenges and open questions for future research.
\end{itemize}


\section{Statement of the canonical model}
\label{eq:canonical_model_statement}

Continuous-time stochastic descriptions of analog circuits are often able to be formulated in the language of It\^{o}'s stochastic calculus \cite{Allison_2005}. In event pixels, the process of photon-to-voltage conversion for time-varying input signals, coupled with thresholding, change amplification, arbitration, and clocking circuits, may be described by systems of coupled stochastic differential equations with time- and signal-dependent coefficients---resulting in a model with numerous parameters. While such descriptions may lend themselves to accuracy, their complexity inhibits analytical tractability and the ability to gain scientific insight. Here, we strip away this complexity to reveal a canonical event pixel model with only four effective parameters, offering a clearer and more interpretable framework. As will be demonstrated in later sections, even this simplified model captures many of the rich and complex dynamics of real event pixels, reproducing key behaviors observed in the literature.


\subsection{Post-amplifier voltages}
\label{subsec:post_amp_voltage_model}

Photons arrive at the event pixel with independent and identically distributed (i.i.d.) exponentially distributed inter-arrival times. The total number of free electrons flowing through the photodiode over a given effective exposure period subsequently follows a Poisson distribution
\[
K\sim\operatorname{Poisson}(\xi_1L+\xi_2), 
\]
where $\xi_1$ depends on several factors, e.g., quantum efficiency and $F/\#$, $\xi_2$ is proportional to the pixel dark current (in electrons per second), and $L$ is the scene radiance.

This photocurrent is then passed through a logarithmic amplifier, with its output subject to Johnson noise. The resulting post-amplifier voltage is modeled as follows \cite{Howard_2023}:
\begin{equation}
\label{eq:post_amp_voltage_model}
V=\beta_1\log(K/\beta_2+1)+\beta_3+\sigma Z,
\end{equation}
where $\beta_1$, $\beta_2$, and $\beta_3$ are constants determined by the amplifier's I-V characteristics, $\sigma$ is the standard deviation of the Johnson noise, and $Z\sim\mathcal N(0,1)$ is a standard normal variable. The probability density of $V$ can be expressed as a series expansion by integrating the joint distribution of $(V,K)$ over the support of $K$. However, this exact expression is analytically intractable for further calculations.

To simplify, we in show Appendix \ref{sec:asymptotic normality_proof} (Theorem \ref{thm:asumptotic_dist_V}) that as $L\to\infty$, $V$ asymptotically follows a normal distribution:
\[
V\overset{d}{\to}\mathcal N(\mu_V,\sigma_V^2),
\]
where the mean and variance are given by:
\[
\mu_V=\beta_1\log\left(\frac{\xi_1 L+\xi_2}{\beta_2}+1\right)+\beta_3
\]
and
\[
\sigma_V^2=\frac{\beta_1^2(\xi_1 L+\xi_2)}{(\xi_1 L+\xi_2+\beta_2)^2}+\sigma^2.
\]
This normal approximation remains valid even for moderate values of $L$, provided that the combined effects of dark current $(\xi_2)$ and Johnson noise $(\sigma)$ are sufficiently large. Based on this, we assume $V$ follows the normal approximation throughout the remainder of this analysis.

To introduce time, we model the post-amplifier voltage as a continuous-time stochastic process. As a first-order approximation, we represent it using the white Gaussian noise
\begin{equation}
\label{eq:continuous_white_gaussian_noise_Vt}
    V_t=\mu_V+\sigma_V\frac{\mathrm dW_t}{\mathrm dt},
\end{equation}
where $W_t$ denotes a standard Wiener process. While the post-amplifier voltages in real event pixels are not spectrally white, this assumption provides a starting point for controlling the amplifier’s bandwidth, which we address next.


\subsection{Low-pass filtering}

To control the bandwidth of the post-amplifier voltage, we pass it through a low-pass filter. The circuit design of event pixels suggests that this filter is second-order; however, one pole typically dominates, justifying the use of a first-order approximation \cite{v2e_2021}. Additionally, the cutoff frequency of this filter, denoted $\omega$, is linearly related to the signal level, i.e., $\omega=a\,\mu_V+b$ \cite{McReynolds_thesis,v2e_2021}. Under the assumption of DC illumination, where $\mu_V$ remains constant, we will represent the filter cutoff frequency as a scalar that pertains to the specific value at $\mu_V$.

From a deterministic standpoint, the temporal response of a first-order low-pass RC filter is described by
\begin{equation}
\label{eq:deterministic_low_pass_ODE}
v_\mathrm{out}(t)=v_\mathrm{in}(t)-\frac{1}{\omega }\frac{\mathrm dv_\mathrm{out}(t)}{\mathrm dt}.
\end{equation}
Here, $v_\mathrm{in}(t)$ and $v_\mathrm{out}(t)$ represent the input and output voltages as functions of time, while $\omega=(RC)^{-1}$ $(\mathrm{rad}/s)$ denotes the filter’s cutoff frequency, which is equivalently expressed in Hertz as $f_c=\omega /(2\pi)$. To extend this to a stochastic framework, we model the output voltage as the continuous-time stochastic process $V_t^{\ell p}$ and drive the filter input with the white Gaussian noise $V_t$ from (\ref{eq:continuous_white_gaussian_noise_Vt}). Substituting these into (\ref{eq:deterministic_low_pass_ODE}) and rearranging terms yields the stochastic differential equation:
\begin{equation}
\label{eq:low_pass_Vt_SDE}
\mathrm dV_t^{\ell p}=\omega (\mu_V-V_t^{\ell p})\mathrm dt+\omega \sigma_V\mathrm dW_t, \end{equation}
which characterizes an Ornstein–Uhlenbeck (OU) process. Applying It\^{o}'s lemma to $\varphi(t,V_t^{\ell p})=e^{\omega t}V_t^{\ell p}$, we obtain the solution:
\[
V_{t+s}^{\ell p}=\mu_V+(V_t^{\ell p}-\mu_V)e^{-\omega s}+\omega \sigma_V\int_t^{t+s}e^{-\omega (t+s-u)}\,\mathrm dW_u,
\]
and by It\^{o}'s isometry, the corresponding transition distribution is found to be:
\begin{equation}
\label{eq:Vlp_transition_dist}
V_{t+s}^{\ell p}|V_t^{\ell p}=v\sim\mathcal N\left(\mu_V+(v-\mu_V)e^{-\omega s},\frac{\omega \sigma_V^2}{2}(1-e^{-2\omega s})\right).
\end{equation}
Since the transition distribution is not parameterized in $t$, the process $V_t^{\ell p}$ forms a continuous-time Markov process. Additionally, it exhibits mean reversion since
\[
\mathsf E(V_{t+s}^{\ell p}|V_t^{\ell p}=v)=\mu_V+(v-\mu_V)e^{-\omega s}\to\mu_V,\quad \text{as } s\to\infty.
\]

Sample paths of $V_t^{\ell p}$ may be simulated by substituting $s=\mathrm dt$ into (\ref{eq:Vlp_transition_dist}), resulting in the recurrence equation
\begin{equation}
\label{eq:path_discretization_formula}
V_{(n+1)\mathrm dt}^{\ell p}=\alpha V_{n\mathrm dt}^{\ell p}+(1-\alpha)\zeta_{n+1},
\end{equation}
where $\alpha=e^{-\omega\mathrm dt}$, $\{\zeta_n\}\overset{\mathrm{iid}}{\sim}\mathcal N(\mu_V,B_\alpha\sigma_V^2)$, and $B_\alpha=\frac{\omega}{2}\frac{1+\alpha}{1-\alpha}$. This formulation reveals the discretized process as a first-order digital IIR filter \cite{Bibbona_2008}. Figure \ref{fig:step_response} shows a simulated sample path for the parameters $(\mathrm dt,V_0^{\ell p},\omega,\mu_V,\sigma_V)=(10^{-5},0,2,1,0.075)$. Setting $V_0^{\ell p}=0$ and $\mu_V=1$ in this example corresponds to the unit step response, which aligns with the conditional expectation $\mathsf E(V_t^{\ell p}|V_0^{\ell p}=0)=1-e^{-\omega t}$---mimicking the the step response of the deterministic filter (\ref{eq:deterministic_low_pass_ODE}) and thus confirming the stochastic generalization.
\begin{figure}[htb]
    \centering
    \includegraphics[scale=1]{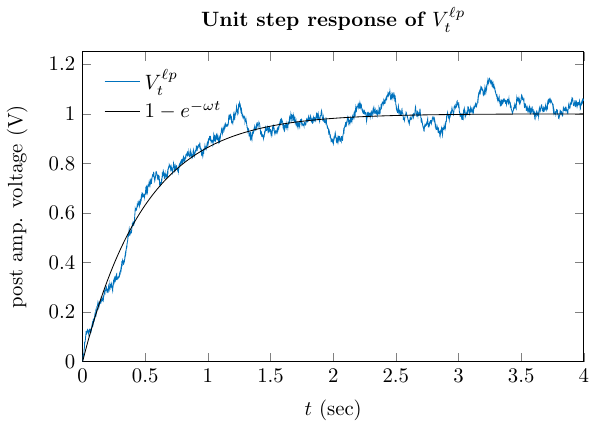}
    \caption{Sample path of the process $V^{\ell p}$ (blue) and its expected value (black) corresponding to the unit step response.}
    \label{fig:step_response}
\end{figure}


\subsection{Event generation}
\label{subsec:event_generation}

Having established the preliminary results, we now develop a canonical description of event generation. We first provide an intuitive description of the event generation process in words, followed by a precise formal definition in a normalized space that facilitates theoretical analysis.

In the informal description, we begin by initializing the comparator in the event pixel with a stored reference voltage $R_0 \in \mathbb{R}$. This reference voltage defines a threshold interval $(R_0 - \theta^-, R_0 + \theta^+)$, where the “on” $(+)$ and “off” $(-)$ thresholds satisfy $\theta^+, \theta^- > 0$. Starting the clock at $t = 0$, the pixel is exposed to incident photons, which, together with internal dark current, generate a noisy post-amplifier voltage $V_t^{\ell p}$, starting from $V_0^{\ell p} \in \mathbb{R}$. As time progresses, the comparator continuously monitors $V_t^{\ell p}$ to detect when it exits the threshold interval. When this exit occurs, an event is triggered, and the pixel outputs the event time and polarity, indicating whether $V_t^{\ell p}$ exited the top or bottom of the threshold interval. Afterward, the pixel enters an inactive state to reset, and at the end of this period, updates its reference voltage to the current value of $V_t^{\ell p}$. The process then repeats\footnote{The purity of canonical representations opens the door to reinterpretation in seeminly unrelated fields, as well as the mathematical tools employed within them \cite{Marion_2008}. Interestingly, the canonical description of event generation resembles pair trading strategies in quantitative finance \cite{Suchato_2024}. In this analogy, the low-pass voltage process represents the price difference between two marketable securities exhibiting mean reversion. When the process exits the threshold, it triggers a trading signal---an event---where the polarity indicates either to short the spread (sell the outperforming asset) or go long (buy the underperforming asset and sell the outperforming one). The inactive state then serves as a penalty for executing a trade, disallowing any further trades during this period.}.

With this informal description in mind, we now proceed to normalize the event generation process using results from Appendix \ref{sec:OU_process_exit_stats_theoretical}, followed by formal definitions of event times and polarities. Normalization will reduce the model’s dimensionality without sacrificing generality. Specifically, we note that the initial conditions $R_0$ and $V_0^{\ell p}$ are arbitrary constants, so we can define them as $R_0 = \mu_V + \omega \sigma_V Z_0$ and $V_0^{\ell p} = \mu_V + \omega \sigma_V X_0$ for some $Z_0, X_0 \in \mathbb{R}$. Similarly, the low-pass filtered voltage process can be written in a location-scale formulation as $V_t^{\ell p} = \mu_V + \omega \sigma_V X_t$ (Proposition \ref{prop:location-scale-formulation}). According to Proposition \ref{prop:location_scale_for_exit_times_and_positions}, determining the time at which $V_t^{\ell p}$ exits the threshold interval is statistically equivalent to determining when the normalized voltage $X_t$, starting from $X_0$, exits the normalized threshold interval $(Z_0 - \tilde\theta^-, Z_0 + \tilde\theta^+)$, where $\tilde\theta^+ = \theta^+ / (\omega \sigma_V)$ and $\tilde\theta^- = \theta^- / (\omega \sigma_V)$ are the normalized thresholds. By performing this normalization, we eliminate $\mu_V$ and absorb all parameters involving $\sigma_V$ into the normalized thresholds, making them functionally dependent on the radiance $L$. This reduction results in a model with just two initial conditions and four effective parameters: $(\omega, \rho, \tilde\theta^+, \tilde\theta^-)$.

With the process is normalized, suppose $X_t$ exits the threshold interval at time $t = \tau_1$, triggering an “event” with timestamp $T_1$ and polarity $E_1$ defined by
\begin{equation}
\label{eq:first_event_def}
T_1\coloneqq\tau_1,\quad 
E_1\coloneqq
\begin{cases}
    \mathrm{on}, &\text{if}\ X_{T_1}= Z_0+\tilde\theta^+\\
    \mathrm{off}, &\text{if}\ X_{T_1}= Z_0-\tilde\theta^-.
\end{cases}
\end{equation}
At time $T_1$, the comparator enters an inactive state during which no events or reference updates can occur. We refer to this period as the \emph{refractory period}, denoted by $\rho$. At the end of this refractory period, at time $t = T_1 + \rho$, the comparator updates the reference voltage from $Z_0$ to $Z_1 = X_{T_1 + \rho}$ and resumes monitoring $X_t$, waiting for it to exit the updated threshold interval $(Z_1 - \tilde\theta^-, Z_1 + \tilde\theta^+)$. This process repeats indefinitely to generate the event stream. With this normalized framework, we can now provide a precise formal definition of the event times and polarities.

\begin{definition}[Event time]
\label{def:event_times}
Let $T_n$ denote the time of the $n$th event. Then,
\[
T_n\coloneqq
\begin{cases}
\tau_1, &n=1\\
T_{n-1}+\tau_n+\rho, &n\geq 2,
\end{cases}
\]
where $\tau_n$ is the $n$th exit time given by
\[
\tau_n\coloneqq
\begin{cases}
\inf\{t\geq 0:X_t\notin(Z_0-\tilde\theta^-,Z_0+\tilde\theta^+)\}, &n=1,\\
\inf\{t\geq T_{n-1}+\rho:\:X_t\notin(Z_{n-1}-\tilde\theta^-,Z_{n-1}+\tilde\theta^+)\}, &n\geq 2,
\end{cases}
\]
and $Z_n\coloneqq X_{T_n+\rho}$ is the $n$th normalized reference voltage.
\end{definition}

\begin{definition}[Event polarity]
\label{def:event_polarity}
Let $E_n$ denote the polarity of the $n$th event. Then,
\[
E_n\coloneqq
\begin{cases}
    \mathrm{on}, &\mathrm{if}\ X_{T_n}= Z_{n-1}+\tilde\theta^+\\
    \mathrm{off}, &\mathrm{if}\ X_{T_n}= Z_{n-1}-\tilde\theta^-.
\end{cases}
\]
\end{definition}

A particular annoyance with these definitions is the necessity to treat the cases $n = 1$ and $n \geq 2$ separately when defining $T_n$ and $\tau_n$. This distinction arises because a refractory period does not precede the first event. To simplify our analysis in Appendix \ref{sec:event_stream_statistics} and the simulation algorithms in Section \ref{sec:simulation_algorithms}, we will omit the first event and begin the clock at $\tau_1$, thereby bypassing the need to handle these two cases separately. Alongside the definitions of event times and polarities, we also introduce a definition for the waiting time between events.

\begin{definition}[Inter-spike interval (ISI)]
\label{def:ISI}
Let $\operatorname{ISI}_n$ denote the elapsed time between the $(n-1)$th and $n$th events. Then,
\[
\operatorname{ISI}_n\coloneqq T_n-T_{n-1},
\]
where $T_0\coloneqq 0$.
\end{definition}

Figure \ref{fig:event_generation_schematic} presents a schematic that integrates all the concepts and notation to illustrate the event generation process. The diagram begins at the end of the refractory period following the $(n-1)$th event (time $T_{n-1}+\rho$) and shows the process $\{X_t\}$ diffusing within the threshold interval. The process exits the interval at time $T_n$, triggering the $n$th event and initiating a new refractory period. Once the refractory period ends, the comparator updates its reference voltage to $Z_n=X_{T_n+\rho}$, completing the event generation cycle.
\begin{figure}[htb]
    \centering
    \includegraphics[scale=1]{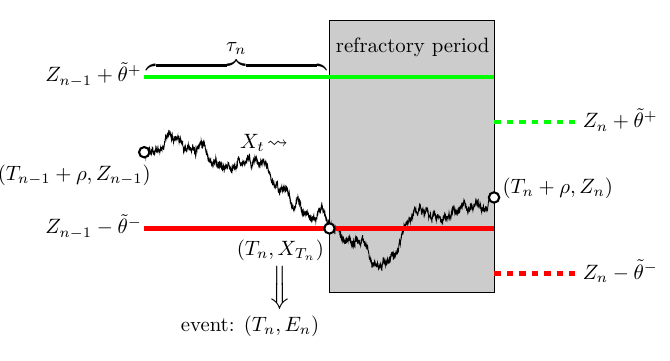}
    \caption{Schematic depiction of the canonical event generation model.}
    \label{fig:event_generation_schematic}
\end{figure}


\section{Summary of key theoretical contributions}
\label{sec:key_theoretical_contributions}

To understand the dynamics of the event stream in the canonical model, we begin with a thought experiment. Imagine illuminating the event pixel with a DC source and standing at the comparator circuit inside the pixel. From this perspective, we cannot observe the path $\{X_t\}$ directly and instead only have access to the reference voltage stored in the comparator. At time $t = 0$, we find the reference voltage in state $Z_0$, and it remains in this state until it randomly jumps to a new value $Z_1 > Z_0$ at time $T_1$. Since $Z_1 > Z_0$, we know that an on-event was triggered and that we waited $T_1$ seconds for the event to occur. Observing further, we see that the reference voltage remains at $Z_1$ before randomly jumping to $Z_2 < Z_1$ at time $T_2$. This indicates that an off-event was triggered, and we waited $T_2 - T_1$ seconds for the next event to occur. After observing several events, we recognize that the time until the next event and the value to which the reference voltage jumps are correlated with the current reference voltage. From this perspective, we hypothesize that knowledge of the path traced out by $\{X_t\}$ is not crucial, and that all the information needed to characterize the DC event stream dynamics is contained in the joint distribution of the waiting times and reference voltage jumps, given the current reference voltage state.

Now, translating this into the formal context of the canonical model, we refer to the continuous-time process $\{Z_t\}$, which represents the reference voltage as a function of time, as the jump process. We track the position of the jump process through the discrete-time jump chain $\{Z_n\}$, where
\[
Z_n=\text{the state of $Z_t$ after the $n$th jump.}
\]
Letting $\operatorname{ISI}_n$ denote the random waiting times between jumps in ${Z_t}$, we deduce that all the information about the event stream in the canonical model is captured in the joint transition distribution $(\operatorname{ISI}_n, Z_n) | Z_{n-1} = z$. Furthermore, from Definitions \ref{def:event_times}-\ref{def:ISI} and the Markov property of $X_t$, we determine the relationship
\begin{equation}
\label{eq:ISI_Zn_transformation}
(\operatorname{ISI}_n, Z_n) \overset{d}{=} (\rho + \tau_n, \alpha X_{T_n} + \sigma\alpha \xi_n),
\end{equation}
where $\alpha = e^{-\omega \rho}$, $\sigma_\alpha^2 = (1 - \alpha^2) / 2$, and $\{\xi_n\} \overset{\mathrm{iid}}{\sim} \mathcal{N}(0, 1)$, independent of the joint process $(\tau_n, X_{T_n})$. Thus, the joint transition distribution $(\tau_n, X_{T_n}) | Z_{n-1} = z$ provides all the necessary information to reconstruct the dynamics of the DC event stream.

To this end, starting in Appendix \ref{sec:OU_process_exit_stats_theoretical}, we derive a series of general results pertaining to the Ornstein-Uhlenbeck process exiting an interval. These results can be adapted to extract useful characteristics of the canonical event stream. All of the following results are obtained by substituting $x \mapsto z$, $(\ell, u) \mapsto (z - \tilde\theta^-, z + \tilde\theta^+)$, $X_{\tau^x}^x \mapsto X_{T_n}$, and $\tau^x \mapsto \tau_n$ into the results derived in this appendix.

Beginning with Corollary \ref{cor:exit_time_position_joint_dist}, we derive
\[
\mathsf P(\tau_n\leq t,X_{T_n}= s|Z_{n-1}=z)=g_2(z,t)\mathds 1_{s=z-\tilde\theta^-}+g_3(z,t)\mathds 1_{s=z+\tilde\theta^+},
\]
where $g_2$ and $g_3$ are solutions to boundary value problems involving partial differential equations (Lemma \ref{lem:exit_time_dists}). This transition distribution is functionally independent of $n$, establishing that $\{Z_t\}$ is a time-homogeneous Markov jump process on the real line. Additional useful results are also provided in this appendix, which can be adapted to deduce the conditional distributions of $X_{T_n} \mid Z_{n-1} = z$ (Theorem \ref{thm:hitting_time_prob_standardized}) and $\tau_n \mid (X_{T_n} = s, Z_{n-1} = z)$ (Corollary \ref{cor:cond_exit_time_dists}). We also derive closed-form analytical expressions for the expected values $\mathsf{E}(X_{T_n} \mid Z_{n-1} = z)$ (Corollary \ref{cor:exit_pos_expected_value}) and $\mathsf{E}(\tau_n \mid Z_{n-1} = z)$ (Theorem \ref{thm:waiting_time_expectation_conditioned_on_in_threshold}).

In Appendix \ref{sec:event_stream_statistics}, we specialize these results to the canonical event stream. We first derive the transition density of $Z_n \mid Z_{n-1} = z$ (Lemma \ref{lem:trans_prob_Zn}). The existence of this transition density, which is functionally independent of $n$, confirms that the jump chain $\{Z_n\}$ is a discrete-time, time-homogeneous Markov chain on the real line. A key question for such processes is whether the chain possesses a limit density---that is, whether the density of $Z_n$ converges to a unique limiting density as $n \to \infty$, irrespective of initial conditions. While simulating this chain for some test parameters suggests that such a limit distribution may exist, a formal proof remains elusive.

To make progress, Lemma \ref{lem:Zn_limit_density_symmetry} derives a symmetry of the limit density, assuming its existence. Specifically, let $f(z \mid \boldsymbol\varphi)$ denote the assumed limit density of $Z_n$ determined by the parameters $\boldsymbol\varphi = (\omega, \rho, \tilde\theta^+, \tilde\theta^-)$, and let $f(z \mid \boldsymbol\varphi^\dagger)$ represent the same limit density but determined by the parameters $\boldsymbol\varphi^\dagger = (\omega, \rho, \tilde\theta^-, \tilde\theta^+)$. Lemma \ref{lem:Zn_limit_density_symmetry} states
\begin{equation}
\label{eq:limit_density_symmetry}
f(z|\boldsymbol\varphi)=f(-z|\boldsymbol\varphi^\dagger).
\end{equation}
Consequently, for symmetric thresholds $\tilde\theta^+ = \tilde\theta^-$, the limit density of $Z_n$, if it exists, must be an even function of $z$. Thus, we conclude that $\lim_{n\to\infty}\mathsf EZ_n=0$ in this special case (Corollary \ref{cor:symmetry_equal_thresholds}). Additionally, a by-product of the existence of the hypothesized limit density is that it is invariant with respect to the initial condition of the canonical model, $(X_0, Z_0)$. This implies that the canonical model is fully determined by just four effective parameters: $\boldsymbol\varphi = (\omega, \rho, \tilde\theta^+, \tilde\theta^-)$.

If the density of $Z_n$ converges to a unique limiting density, it becomes meaningful to discuss the limit (stationary) statistics of the event stream, such as event polarity probabilities and rates. The symmetry in equation (\ref{eq:limit_density_symmetry}) extends to these stationary statistics. In the case that such a stationary state exist, we can express the asymptotic (stationary or long-time horizon) statistics of the event stream as follows.
\begin{proposition}[Stationary event stream statistics]
\label{prop:stationary_event_statistics}
The stationary statistics of the event stream are those of the stream once the density of $Z_n$ has reached its limiting form. Let $(\pi_\mathrm{off}, \pi_\mathrm{on})$ denote the stationary distribution of event polarities and $r_\mathrm{total}$ denote the stationary total-event rate. Then, for $j \in \{\mathrm{on}, \mathrm{off}\}$:
\[
\pi_j=\lim_{n\to\infty}\mathsf E(\mathsf P(E_n=j|Z_{n-1}=z))
\]
and
\[
r_\mathrm{total}=\frac{1}{\lim_{n\to\infty}\mathsf E(\mathsf E(\operatorname{ISI}_n|Z_{n-1}=z))},
\]
where the outer expectations are taken w.r.t.~the distribution of $Z_{n-1}$.
\end{proposition}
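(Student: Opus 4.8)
The plan is to treat the two identities separately, in each case collapsing the stationary statistic to an integral against the assumed limit density $f$ of the jump chain $\{Z_n\}$ and then justifying the interchange of the $n\to\infty$ limit with that integral.

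\emph{Polarity probabilities.} First I would set $h_j(z):=\mathsf P(E_n=j\mid Z_{n-1}=z)$ for $j\in\{\mathrm{on},\mathrm{off}\}$. By the adapted form of Corollary \ref{cor:exit_time_position_joint_dist}, $h_{\mathrm{off}}(z)=\lim_{t\to\infty}g_2(z,t)$ and $h_{\mathrm{on}}(z)=\lim_{t\to\infty}g_3(z,t)$, so $h_j$ is functionally independent of $n$; moreover $0\le h_j\le 1$ and $h_{\mathrm{on}}+h_{\mathrm{off}}=1$ since the OU process leaves the bounded interval $(z-\tilde\theta^-,z+\tilde\theta^+)$ in finite time almost surely (Theorem \ref{thm:hitting_time_prob_standardized}). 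The tower property gives $\mathsf P(E_n=j)=\int h_j(z)f_{Z_{n-1}}(z)\,\mathrm dz$, where $f_{Z_{n-1}}$ denotes the density of $Z_{n-1}$. Under the standing hypothesis that $f_{Z_n}\to f$ (in $L^1$, e.g.\ by Scheff\'e's lemma), the uniform bound $|h_j|\le 1$ yields $\big|\mathsf P(E_n=j)-\int h_j f\big|\le\|f_{Z_{n-1}}-f\|_1\to 0$, so $\pi_j=\lim_{n\to\infty}\mathsf E\big(\mathsf P(E_n=j\mid Z_{n-1}=z)\big)=\int h_j(z)f(z)\,\mathrm dz$, which is precisely the polarity probability once $Z_{n-1}$ has reached its limiting law.

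\emph{Total-event rate.} Next I would set $m(z):=\mathsf E(\operatorname{ISI}_n\mid Z_{n-1}=z)=\rho+\mathsf E(\tau_n\mid Z_{n-1}=z)$, which is $n$-independent by the adapted Theorem \ref{thm:waiting_time_expectation_conditioned_on_in_threshold}. The one genuine analytic point here is boundedness of $m$ on $\mathbb R$: since the OU drift is mean-reverting, the process started at the interval center $z$ is pulled toward $0$ and exits $(z-\tilde\theta^-,z+\tilde\theta^+)$ quickly when $|z|$ is large; concretely the closed form for $\mathsf E(\tau_n\mid Z_{n-1}=z)$ supplied by Theorem \ref{thm:waiting_time_expectation_conditioned_on_in_threshold} is continuous in $z$ and stays bounded as $|z|\to\infty$, so $M:=\sup_z m(z)<\infty$. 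The same $L^1$ argument then gives $\mathsf E(\operatorname{ISI}_n)=\int m\,f_{Z_{n-1}}\to\bar m:=\int m(z)f(z)\,\mathrm dz$, and $\bar m$ is exactly the denominator $\lim_{n\to\infty}\mathsf E\big(\mathsf E(\operatorname{ISI}_n\mid Z_{n-1}=z)\big)$ in the statement. It then remains to identify the long-run event rate with $1/\bar m$: with $T_N=\sum_{k=1}^N\operatorname{ISI}_k$ the time of the $N$th event (clock started at $\tau_1$, as arranged in Section \ref{subsec:event_generation}), the sequence $\{\operatorname{ISI}_n\}$ is a deterministic function of the time-homogeneous Markov chain $\{Z_n\}$ together with independent noise, so once $\{Z_n\}$ has reached its stationary law the ergodic theorem gives $T_N/N\to\bar m$ almost surely, whence the counting process $N(t)=\#\{k:T_k\le t\}$ satisfies $N(t)/t\to 1/\bar m$ and $r_{\mathrm{total}}=1/\bar m$.

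The hard part will be making this last identification fully rigorous. The hypothesis as stated guarantees only convergence of the one-dimensional marginals of $Z_n$, whereas a clean strong law for $\{\operatorname{ISI}_n\}$ needs more---for instance positive Harris recurrence of $\{Z_n\}$, or a regeneration structure allowing a renewal--reward argument over successive returns of $\{Z_n\}$ to a fixed recurrent set. In the write-up I would either strengthen the standing assumption to ``$\{Z_n\}$ admits a stationary density $f$ to which it converges and is ergodic'' or establish the rate through such a regeneration argument; the remaining ingredients---the two $L^1$ limit exchanges, boundedness of $m$, and $h_{\mathrm{on}}+h_{\mathrm{off}}=1$---are routine.
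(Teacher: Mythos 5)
The paper gives no proof of Proposition \ref{prop:stationary_event_statistics}: it is stated in the summary section essentially as a definition of the stationary statistics under the standing hypothesis (Assumption \ref{asum:stability}) that $Z_n$ admits a unique, initial-condition-independent limit density, with the two displayed formulas read off from the tower property and the time-homogeneity of the conditionals in Corollaries \ref{cor:event_prob_cond_on_Z} and \ref{cor:ISI_cond_on_Z}. Your proposal therefore supplies an argument where the paper supplies none, and it is sound as far as it goes: the polarity identity is exactly the tower property plus an $L^1$ exchange of limits, which your bound $|h_j|\le 1$ handles, and your observation that $m(z)=\rho+\mathsf E\tau^z$ is bounded is correct (for large $|z|$ the mean-reverting drift $-\omega x$ expels the process through the boundary nearer the origin in time $O(1/|z|)$, and the hypergeometric expression in Theorem \ref{thm:waiting_time_expectation_conditioned_on_in_threshold} is continuous). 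The one place where you go beyond the paper is also the one place where you correctly identify a genuine gap: equating the ``stationary total-event rate'' with the reciprocal of the limiting mean ISI is a renewal/ergodic statement about the time average $T_N/N$, and convergence of the one-dimensional marginals of $Z_n$ (all that Assumption \ref{asum:stability} asserts) does not by itself yield a strong law for the additive functional $\sum_k\operatorname{ISI}_k$. The paper avoids this by treating the displayed formula as the \emph{definition} of $r_\mathrm{total}$; if you instead want $r_\mathrm{total}$ to mean the almost-sure long-run event count per unit time, your suggested strengthening to positive Harris recurrence (or a regeneration argument) is exactly what is needed, and flagging that distinction is a genuine improvement over the paper's presentation rather than a defect in your proof.
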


Exact analytical expressions for the conditional event polarity probabilities $\mathsf{P}(E_n = j \mid Z_{n-1} = z)$ and the conditional expected inter-spike interval $\mathsf{E}(\operatorname{ISI}n \mid Z{n-1} = z)$ are provided in Corollary \ref{cor:event_prob_cond_on_Z} and Corollary \ref{cor:ISI_cond_on_Z}, respectively. Additionally, an exact analytical expression for the conditional expectation $\mathsf{E}(Z_n \mid Z_{n-1} = z)$ is given in Corollary \ref{cor:expectation_cond_on_Z}.

These conditional expressions serve several useful purposes. First, since each of the derived conditional expressions is time-homogeneous (functionally independent of $n$), they provide valuable insights into the inter-event dynamics at any given point in time. We will demonstrate how these insights can be applied in Section \ref{sec:simulation_examples}. Second, when combined with Algorithm \ref{alg:algorithm_1} in Section \ref{sec:simulation_algorithms}, these conditional expressions can be used to generate Monte Carlo estimates of the stationary statistics in Proposition \ref{prop:stationary_event_statistics}, eliminating the need to fully simulate an event stream. This is done by first simulating $J + N$ elements of the normalized exit voltage chain $\{z_1, z_2, \dots, z_{J+N}\}$. After discarding the initial $N$ elements, which have not yet reached the stationary state, the stationary event stream statistics can be computed, for example,
\[
\hat\pi_\mathrm{on}=\frac{1}{J}\sum_{j=1}^{J}\mathsf P(E_n=\mathrm{on}|Z_{n-1}=z_{j+N}).
\]

We now proceed to describe the algorithms for simulating event streams from the canonical model.


\section{Simulation algorithms}
\label{sec:simulation_algorithms}

Using the proposed canonical model, we introduce two simple algorithms for the direct and exact simulation of the normalized reference voltage process $\{Z_n\}$ and the event stream process $\{T_n, E_n\}$. An intuitive approach to simulating these processes, based on the event generation schematic in Figure \ref{fig:event_generation_schematic}, would involve generating sample paths $\{X_t\}$ using the recursive discretization in equation (\ref{eq:path_discretization_formula}) and then implementing logical checks to determine when and where the path crosses the threshold interval boundaries. However, as discussed in Section \ref{sec:key_theoretical_contributions}, generating these paths is unnecessary since all the information needed to characterize the event stream is contained within the transition distribution of the reference voltage process. Moreover, the accuracy of path-based methods is dependent on the chosen time step, with finite-step approximations leading to systematic overestimation of event times due to the hidden path phenomenon \cite{Bibbona_2008}.

To address these issues, we adopt a more sophisticated, path-free approach that allows for direct simulation of event timestamps and polarities without generating intermediate sample paths. In this section, we present the algorithms, deferring examples to Section \ref{sec:simulation_examples}.

The key to calculating stationary event stream statistics under DC illumination lies in the behavior of the normalized reference voltage process $\{Z_n\}$. Accordingly, Algorithm \ref{alg:algorithm_1} provides a method for direct, exact, and path-free sampling of this process, leveraging analytical results derived in Appendix \ref{sec:OU_process_exit_stats_theoretical}. Algorithm \ref{alg:algorithm_1} initializes the Markov chain at $Z_0 = \texttt{start}$, generates a normalized exit voltage value $X_{T_n}$, and then uses the Markov property of $X_t$ to sample $Z_n$ based on the observed value of $X_{T_n}$. The term inside the Bernoulli distribution, $\mathsf P(X_{\tau^x}^x = \ell) |_{x = Z_{n-1}}$, represents the probability of $X_t$ exiting the threshold interval from the bottom given $Z_{n-1}$ (Theorem \ref{thm:hitting_time_prob_standardized}).

\begin{algorithm}
\caption{Simulation of the normalized reference voltage process $\{Z_n\}$.}\label{alg:algorithm_1}
\begin{algorithmic}
\Require $\omega$, $\rho$, $\tilde\theta^-$, $\tilde\theta^+$, \texttt{start}, $N$
\State $\alpha=\exp(-\omega\rho)$
\State $\sigma_\alpha=\sqrt{(1-\alpha^2)/2}$
\State $Z_0=\texttt{start}$
\For{$n=1:N$}
\State $(\ell,u)=(Z_{n-1}-\tilde\theta^-,Z_{n-1}+\tilde\theta^+)$ \Comment{Define threshold int.}
\State $B_n=\operatorname{Bernoulli}(\mathsf P(X_{\tau^x}^x=\ell)|_{x=Z_{n-1}})$ \Comment{ P(exit at bottom), (Thm.~\ref{thm:hitting_time_prob_standardized})}
\State $X_{T_n}=\ell B_n+u(1-B_n)$  \Comment{Sample $X_{T_n}|Z_{n-1}$}
\State $\xi_n=\operatorname{Gaussian}(0,1)$
\State $Z_n=\alpha X_{T_n}+\sigma_\alpha\xi_n$ \Comment{Sample $Z_n|X_{T_n}$ (Lem.~\ref{lem:Zn_limit_density_symmetry})}
\EndFor
\end{algorithmic}
\end{algorithm}

A slight modification of Algorithm \ref{alg:algorithm_1} allows for the generation of an event stream without timestamps, producing only event polarities. However, to generate a complete event stream with timestamps, it is necessary to sample from the transition distribution of $(\tau_n, X_{T_n}) | Z_{n-1} = z$. To achieve this, we incorporate the Diffusion Exit (\texttt{DiffExit}) algorithm recently proposed by Herrmann \& Zucca for path-free simulation of diffusion exit statistics \cite{Herrmann_2020, Herrmann_2022}. Algorithm \ref{alg:algorithm_2} extends Algorithm \ref{alg:algorithm_1} by integrating \texttt{DiffExit}, enabling full event stream simulation. Structurally, this algorithm closely mirrors Algorithm \ref{alg:algorithm_1}: it initializes the Markov process at $Z_0 = \texttt{start}$, samples $(\tau_n, X_{T_n}) | Z_{n-1} = z$, and applies the transformation in (\ref{eq:ISI_Zn_transformation}) to obtain $(\operatorname{ISI}_n, Z_n)$. Once these values are obtained, the event time-polarity pair $(T_n, E_n)$ is computed using the timestamp update rule (Definition \ref{def:event_times}) and the event polarity quantization rule (Definition \ref{def:event_polarity}).

\begin{algorithm}
\caption{Simulation of the event stream process $\{T_n,E_n\}$.}\label{alg:algorithm_2}
\begin{algorithmic}
\Require $\omega$, $\rho$, $\tilde\theta^-$, $\tilde\theta^+$, \texttt{start}, $N$
\State $\alpha=\exp(-\omega\rho)$
\State $\sigma_\alpha=\sqrt{(1-\alpha^2)/2}$
\State $T_0=0$
\State $Z_0=\texttt{start}$
\For{$n=1:N$}
\State $(\ell,u)=(Z_{n-1}-\tilde\theta^-,Z_{n-1}+\tilde\theta^+)$  \Comment{Define threshold int.}
\State $(\tau_n,X_{T_n})=$ \texttt{DiffExit}($Z_{n-1}$\,,\,$(\ell,u)$) \Comment{Sample $(\tau_n,X_{T_n})|Z_{n-1}$}
\State $\xi_n=\operatorname{Gaussian}(0,1)$
\State $(\operatorname{ISI}_n,Z_n)=(\tau_n+\rho,\alpha X_{T_n}+\sigma_\alpha \xi_n)$ \Comment{Eq.~\ref{eq:ISI_Zn_transformation}/Lem.~\ref{lem:Zn_limit_density_symmetry}}
\State $B_n=\mathds 1_{X_{T_n}=\ell}$ \Comment{Boolean: exit from bottom?}
\State $(T_n,E_n)=(T_{n-1}+\operatorname{ISI}_n\,,\,\mathrm{off}\cdot B_n+\mathrm{on}\cdot(1-B_n))$ \Comment{Defs.~\ref{def:event_times}-\ref{def:event_polarity}}
\EndFor
\end{algorithmic}
\label{alg:DC_event_stream}
\end{algorithm}

\FloatBarrier


\section{DC event stream dynamics demystified}
\label{sec:simulation_examples}

In a study by McReynolds et al., two key observations were made regarding the DC event stream dynamics of the DAVIS346 sensor: (1) consecutive events are highly likely to have opposite polarities, and (2) same-polarity event pairs tend to have on average longer inter-spike intervals (ISIs) than opposite-polarity pairs \cite{Mcreynolds_2023_exploiting}. The objective of this example is to demonstrate this phenomenon using the proposed canonical model and leverage the derived analytics to explain the underlying mechanism.

To this end, we simulated $N=10^6$ event time-polarity pairs using Algorithm \ref{alg:algorithm_2} with parameters $(\omega,\rho,\tilde\theta^-,\tilde\theta^+)=(5,0.002,0.96,0.94)$ and initial reference voltage $\texttt{start}=0$. These parameter values were selected to simulate an event pixel near the dark current limit ($L=0$) and reproduce the phenomenology observed in \cite{Mcreynolds_2023_exploiting}. Under the assumption that the reference voltage $Z_n$ admits a unique limit distribution (see discussion in Section \ref{sec:key_theoretical_contributions}), the choice of \texttt{start} should have no effect on the long term statistics of the event stream. To aid the following analysis, the corresponding normalized reference voltage chain $\{Z_n\}$ was recorded alongside the event stream. 

Table \ref{tbl:event_simulation_summary_stats} summarizes the key statistics of the simulated event stream. Given the nearly symmetric thresholds in this example, Corollary \ref{cor:symmetry_equal_thresholds} predicts that the limiting probabilities and rates of on- and off-events should be nearly identical---an expectation confirmed by the table. As discussed in Section \ref{sec:key_theoretical_contributions}, reducing the event stream to these summary statistics implicitly assumes that the reference voltage process asymptotes to a stationary state that is independent of initial conditions. Thus, the statistics in Table \ref{tbl:event_simulation_summary_stats} should be interpreted as limit statistics, e.g., $\lim_{n\to\infty}\mathsf P(E_n=\mathrm{on})$.

More importantly, the synthetic event stream exhibits a $\sim92\%$ probability of adjacent events having opposite polarity, closely aligning with the experimental DAVIS346 data, which reported $\sim93\%$.

\begin{table}[htb]
\caption{Summary statistics of synthetic event stream using Algorithm \ref{alg:DC_event_stream}.}
\label{tbl:event_simulation_summary_stats}
\centering
\begin{tabular}{|p{8cm}|p{2.5cm}|}
\hline
\multicolumn{2}{|c|}{Event stream summary statistics} \\
\hline
Statistic& value\\
\hline
number of events, $N$ &$10^6$ (obs.)\\
record time, $T_N$ &$3.84$ (weeks)\\
on event probability, $\mathsf P(E_n=\mathrm{on})$ &$0.505$ (-)\\
off event probability, $\mathsf P(E_n=\mathrm{off})$ &$0.495$ (-)\\
on-to-off event probability, $\mathsf P(E_n=\mathrm{off}|E_{n-1}=\mathrm{on})$ &$0.923$ (-)\\
off-to-on event probability, $\mathsf P(E_n=\mathrm{on}|E_{n-1}=\mathrm{off})$ &$0.906$ (-)\\
on-to-on event probability, $\mathsf P(E_n=\mathrm{on}|E_{n-1}=\mathrm{on})$ &$0.094$ (-)\\
off-to-off event probability, $\mathsf P(E_n=\mathrm{off}|E_{n-1}=\mathrm{off})$ &$0.074$ (-)\\
probability of opposite polarity pairs &$0.916$ (-)\\
total event rate, $r_\mathrm{total}$ &$0.431$ (ev/s)\\
on event rate, $r_\mathrm{on}$ &$0.218$ (ev/s)\\
off event rate, $r_\mathrm{off}$ &$0.213$ (ev/s)\\
\hline
\hline
\end{tabular}
\end{table}

Inter-spike intervals (ISIs) were computed from the simulated event stream and categorized based on event polarity transitions: (on$\to$off), (off$\to$on), (on$\to$on), and (off$\to$off). The results, summarized in Figure \ref{fig:ISI_data_comparison}, compare ISI histograms from the DAVIS346 sensor with those from the simulated event stream generated using Algorithm \ref{alg:DC_event_stream}. The figure highlights the two key findings that are consistent across both datasets. First, opposite-polarity event pairs occur far more frequently than same-polarity pairs. Second, the histograms for opposite-polarity pairs peak about an order of magnitude earlier than those for same-polarity pairs, reflecting the substantial difference in the scale of their respective ISIs.

\begin{figure}[htb]
\centering
\begin{subfigure}{0.48\textwidth}
    \includegraphics[width=\textwidth]{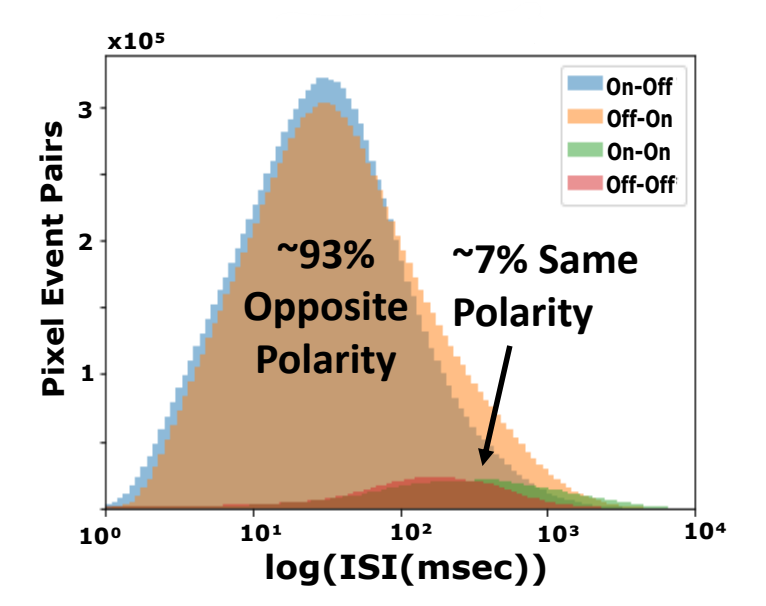}
    \caption{Inter-spike interval data recorded with the DAVIS346 event sensor under 10 mlux DC illumination \cite[Fig. 2]{Mcreynolds_2023_exploiting}.}
    \label{fig:McReynolds_ISI_plot}
\end{subfigure}
\hfill
\begin{subfigure}{0.48\textwidth}
    \includegraphics[width=0.9\textwidth]{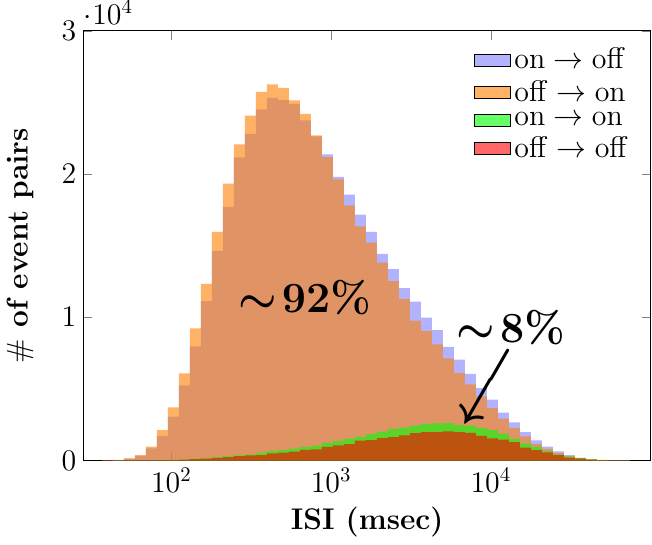}
    \caption{Inter-spike interval data simulated with the Algorithm \ref{alg:DC_event_stream} for the parameters $(\omega,\rho,\tilde\theta^-,\tilde\theta^+)=(5,0.002,0.96,0.94)$.}
    \label{fig:simulated_ISI_histograms}
\end{subfigure}
\hfill   
\caption{Comparison of inter-spike interval data from DAVIS346 event sensor and simulated data using Algorithm \ref{alg:DC_event_stream}.}
\label{fig:ISI_data_comparison}
\end{figure}


\subsection{Canonical interpretation}
\label{subsec:DC_canonical_interpretation}

To uncover the underlying cause of these event stream characteristics within the canonical framework, we begin by examining Figure \ref{fig:conditional_plots}, which compares the analytical solutions for the event stream conditionals with estimates from the simulation. Specifically, the figure presents the conditional expectation $\mathsf E(Z_n|Z_{n-1}=z)$ (Corollary \ref{cor:expectation_cond_on_Z}), conditional event probabilities $\mathsf P(E_n=j|Z_{n-1}=z)$ (Corollary \ref{cor:event_prob_cond_on_Z}), and conditional inter-spike interval $\mathsf E(\operatorname{ISI}_n|Z_{n-1}=z)$ (Corollary \ref{cor:ISI_cond_on_Z}). Additionally, the density-normalized histogram of the simulated reference voltages ${Z_n}$ is shown at the bottom of the figure. Since the number of simulated events ($N=10^6$) is large, this density normalized histogram represents an approximation to the limit density of $Z_n$. Upon inspection, the histogram exhibits near symmetry---further confirming the prediction of Corollary \ref{cor:symmetry_equal_thresholds}.

The conditional probability curves intersect at $z^\ast=0.01$, which coincides with the value of $z$ that: (1) corresponds to the root of $\mathsf E(Z_n|Z_{n-1}=z)$ nearest the origin and (2) maximizes $\mathsf E(\operatorname{ISI}_n|Z_{n-1}=z)$. Outside a narrow interval $\approx(-0.029,0.049)$ centered at $z^\ast$, the polarity of the next event is almost entirely determined: when $Z_n\leq -0.029$, the next event is nearly always on, and when $Z_n\geq0.049$, the next event is almost certainly off. Furthermore, Figure \ref{fig:conditional_plots} illustrates that as the exit voltage deviates from $z^\ast$, the expected ISI decreases sharply.

\begin{figure}[p]
\centering
\includegraphics[scale=1]{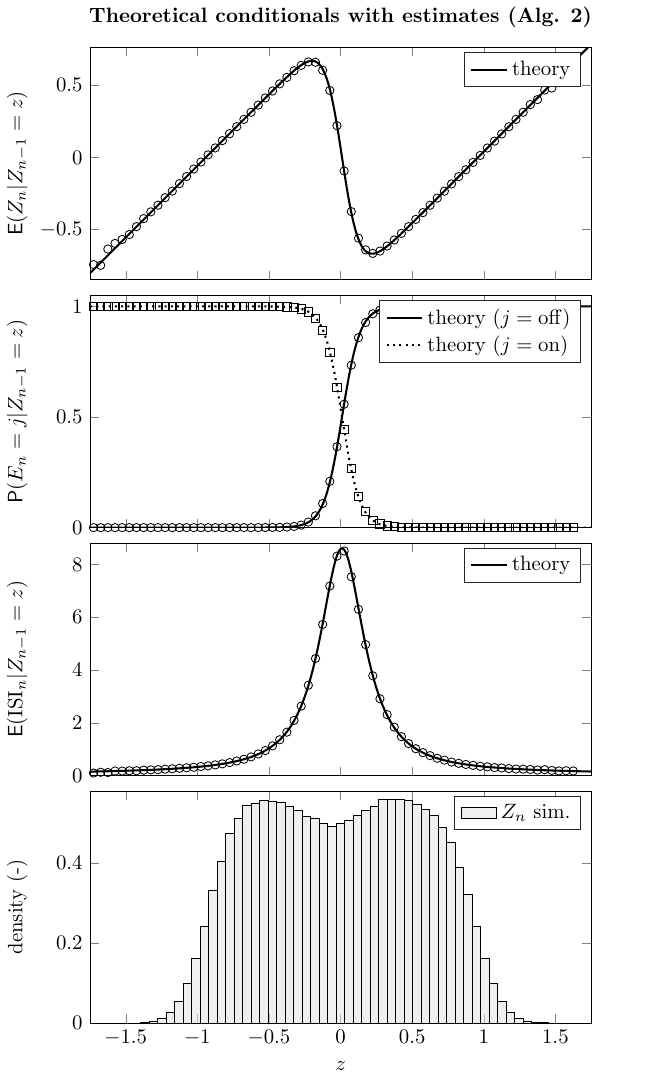}
\caption{Theoretical conditionals (curves) and estimates from Algorithm~\ref{alg:DC_event_stream} (points), with the density-normalized histogram of the normalized reference voltages (bottom).}
\label{fig:conditional_plots}
\end{figure}

These observations provide clues hinting at the oscillatory nature of the DC event stream. However, to fully understand the underlying mechanism, we must taker a closer look at the dynamics of $\{Z_n\}$ not captured in the histogram of Figure \ref{fig:conditional_plots}. To this end, suppose at some point in the chain, $n$, $Z_n=z$. Then the density of $Z_{n+1}|Z_n=z$ is given by the one-step transition density $p^{(1)}(z,z^\prime)$ in Lemma \ref{lem:trans_prob_Zn} and the $m$-step transition density, the density of $Z_{n+m}|Z_n=z$, is given by
\[
p^{(m)}(z,z^\prime)=\int_{-\infty}^\infty p^{(m-1)}(z,t)p^{(1)}(t,z^\prime)\,\mathrm dt
\]
with $p^{(0)}(z,z^\prime)\coloneqq\delta(z^\prime-z)$.

These transition densities are not easily evaluated analytically or numerically. To visualize them, we ran Algorithm \ref{alg:algorithm_1} a total of $5\times 10^6$ times for $N=200$ and $\texttt{start}=-0.5$ using the same sensor parameters as those for the Algorithm \ref{alg:algorithm_2} simulation. We then computed kernel density estimates from the simulated $Z_0$ data, $Z_1$ data, and so on. Figure \ref{fig:Zn_trans_density} plots the resulting kernel density estimates of $p^{(m)}(-0.5,z^\prime)$ for $m\in\{0,1,\dots,7,200\}$ along with a red line at $z^\prime=0.01$ (the value of $z^\ast$). We included the density for $m=200$ for the sake of comparison to the histogram in Figure \ref{fig:conditional_plots} (bottom). For large values of $m$, the dynamic changes in the transition densities vanishes indicating a limit state is reached.

More interestingly, Figure \ref{fig:Zn_trans_density} shows that for $z=-0.5$, the transition densities oscillate around the point $z^\ast$ for many steps before decaying to the limit state. This oscillatory behavior occurs for most $z$ sufficiently far away from $z^\ast$. What this shows is that if at some point in the chain, $Z_n$ is found far away from $z^\ast$, then we are likely to see the next several values of the chain oscillate above and below $z^\ast$. Looking back at the conditional plots in Figure \ref{fig:conditional_plots}, we can begin to see how this effect will result in many reference voltages in the chain oscillating above and below the critical interval where the event polarities are almost certain, leading to many opposite polarity event pairs with relatively short ISIs.
\begin{figure}[htb]
    \centering
    \includegraphics[scale=1]{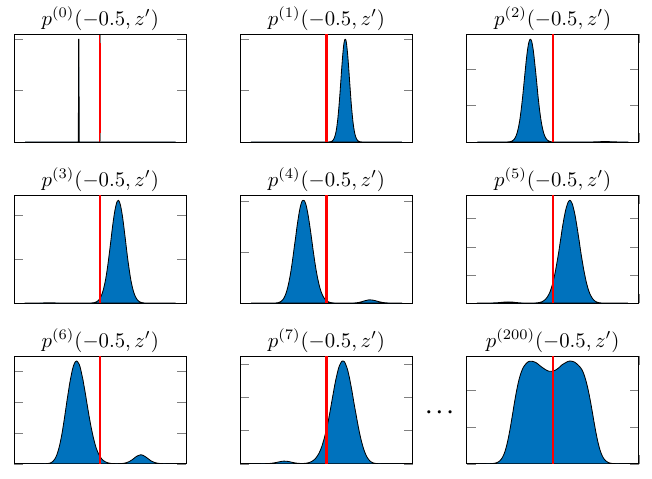}
    \caption{Kernel density estimates of the $m$-step transition densities $p^{(m)}(z,z^\prime)$ for $m\in\{0,1,\dots,7,200\}$ and $z=-0.5$ using Algorithm \ref{alg:algorithm_1}.}
    \label{fig:Zn_trans_density}
\end{figure}

It's not computationally practical to further analyze the dynamics of $\{Z_n\}$ by computing the transition densities. To obtain a crude but simple approximation to these oscillatory dynamics, we can ignore the stochastic behavior of $\{Z_n\}$ and instead analyze the deterministic sequence $\{z_n\}$ defined by a recursion on the conditional expectation:
\begin{equation}
\label{eq:Lemeray_iteration}
z_n=\mathsf E(Z_n|Z_{n-1}=z_{n-1}),\quad n=1,2,3,\dots
\end{equation}
This recursion approximates the evolution of the reference voltage's expected value in the absence of stochastic fluctuations, providing a simplified means for studying the dynamics and stability of $\{Z_n\}$. Analysis of this recursion reveals a single unstable fixed point at $z^\dagger=0.009$, satisfying $z^\dagger=\mathsf E(Z_n|Z_{n-1}=z^\dagger)$. For any initial value $z_0\neq z^\dagger$, the unstable fixed point repels the iterates, causing the sequence to converge to a limit cycle oscillating between $z=-0.624$ and $z=0.312$.

To visualize these dynamics, Figure \ref{fig:Lemeray_plot} presents the L\'{e}meray diagram for this recursion, initialized at $z_0=0$. Defining $f(z)=\mathsf E(Z_n|Z_{n-1}=z)$, the diagram is constructed by iterating: first plotting $(z_0,z_1)=(z_0, f(z_0))$, then drawing a horizontal line to the diagonal $z_n=z_{n-1}$, followed by a vertical line to the curve, generating the sequence of points $(z_0,z_1)=(z_0, f(z_0))$, $(z_1,z_2)=(f(z_0), f(f(z_0)))$, and so on. The recursion path, colored from blue (early iterations) to red (later iterations), visualizes the dynamic behavior of the reference voltage and reveals the attracting oscillatory equilibrium state $-0.624,0.312,-0.624,\dots$.

\begin{figure}[htb]
\centering
\includegraphics[scale=1]{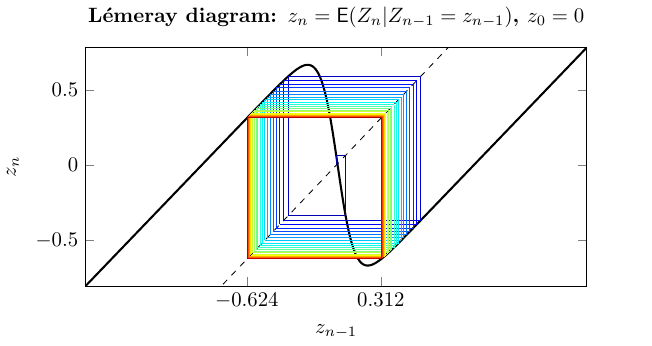}
\caption{L\'{e}meray diagram for the recursion (\ref{eq:Lemeray_iteration}) with initial condition $z_0=0$. The recursion path is colored by iteration number: $n=0$ ({\color{blue}blue}) to $n=50$ ({\color{red}red}). The recursion approaches an oscillatory limit cycle between $z=-0.624$ and $z=0.312$.}
\label{fig:Lemeray_plot}
\end{figure}

We can further visualize the effect of this attracting oscillatory equilibrium state with Algorithm \ref{alg:algorithm_3}, which compute the deterministic iteration for all of the conditionals driven by the recursive sequence $\{z_n\}$ in (\ref{eq:Lemeray_iteration}). Running the algorithm for the initial condition $z_0=0$ produces the results in Figure \ref{fig:Algorithm3_iteration_plot}, where the iterates for conditional event probabilities, $u_n$ and $v_n$, rapidly settle into an alternating pattern between zero and one. This behavior reinforces the conclusion that the DC event stream is driven by an intrinsic oscillatory mechanism.

\begin{algorithm}
\caption{Deterministic recursion for event stream conditionals.}
\label{alg:algorithm_3} \begin{algorithmic} 
\Require $z_0$, $N$ \For{$n=1:N$}
\State $u_n=\mathsf P(E_n=\mathrm{off}|Z_{n-1}=z_{n-1})$
\State $v_n=\mathsf P(E_n=\mathrm{on}|Z_{n-1}=z_{n-1})$
\State $w_n=\mathsf E(\operatorname{ISI}n|Z{n-1}=z_{n-1})$
\State $z_n=\mathsf E(Z_n|Z_{n-1}=z_{n-1})$
\EndFor
\end{algorithmic}
\end{algorithm}

\begin{figure}[htb]
\centering
\includegraphics[scale=1]{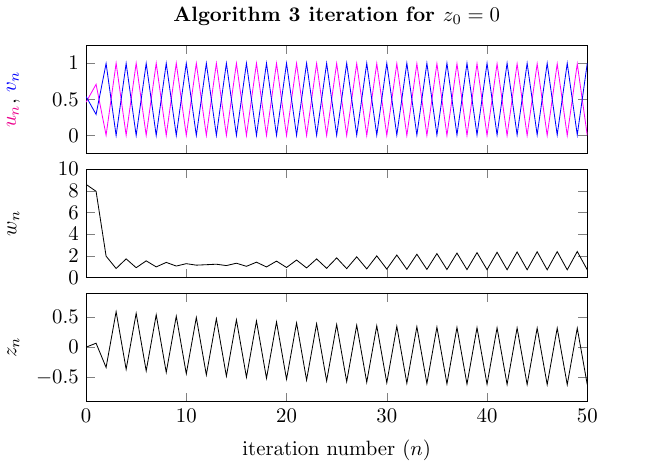}
\caption{Iteration results from Algorithm \ref{alg:algorithm_3} for the initial condition $z_0=0$.}
\label{fig:Algorithm3_iteration_plot}
\end{figure}

Bringing these observations together, we can now explain the mechanisms driving the two key properties of DC event stream dynamics first identified by McReynolds et al. As shown in Figure \ref{fig:Algorithm3_iteration_plot}, the reference voltage process $\{Z_n\}$ exhibits a strong tendency to oscillate above and below $z^\ast$, with these oscillations drawn toward an equilibrium limit cycle. Within this cycle, event polarities are almost exactly correlated with the sign of $(Z_n-z^\ast)$, and ISIs remain relatively short. These two effects combine to produce many opposite polarity event pairs with, relatively speaking, short ISIs. However, stochastic excursions in $Z_n$ occasional push the reference voltage into the critical region $(-0.029,0.049)$, where two effects emerge simultaneously: (1) the next event polarity becomes nearly unpredictable, and (2) the expected ISI dramatically increases (see Figure \ref{fig:conditional_plots}). These two effects lead to the possibility of the next event having the same polarity as its predecessor but with a relatively large ISI. Due to the unstable fixed point at $z^\dagger$, whenever $Z_n$ enters this critical region, it is repelled back into the system’s natural oscillatory cycle and the pattern repeats.

\FloatBarrier


\subsection{Diagnostic tools: initial investigations}
\label{subsec:DC_dynamics_diagnostic_tools}

We've now seen how the use of L\'{e}meray diagrams and the iteration in Algorithm \ref{alg:algorithm_3} serve as simple, yet powerful, diagnostic tools for explaining the oscillatory dynamics of the event stream. We also established the relationship between the expected ISI conditioned on the reference voltage and observed that the ISI reaches its maximum when $Z_n$ is near a critical point close to the origin. Naturally, we wonder how the parameters used in this example could be modified to break the stream from this oscillatory cycle and if these diagnostic tools are still able to capture the event stream dynamics under a change in parameters.

According to Corollary \ref{cor:expectation_cond_on_Z}, the conditional expectation is given by
\[
\mathsf E(Z_n|Z_{n-1}=z)=\alpha \left(z-\tilde\theta^-\mathsf P(X_{\tau^z}^z=\ell)
+\tilde\theta^+\mathsf P(X_{\tau^z}^z=u)\right),
\]
where $\alpha=e^{-\omega\rho}$, $(\ell,u)=(z-\tilde\theta^-,z+\tilde\theta^+)$, and $\mathsf P(X_{\tau^x}^x=s)$ is given in Theorem \ref{thm:hitting_time_prob_standardized}. Here, the dimensionless parameter $\alpha\in(0,1)$ globally scales the conditional expectation, with $\mathsf E(Z_n|Z_{n-1}=z) \to 0$ as $\alpha \to 0$. This suggests that increasing the refractory period while keeping all other model parameters fixed should dampen the conditional expectation curve globally, altering the event dynamics.

To investigate this, we increase the refractory period to $\rho=0.39$ sec. Figures \ref{fig:Lemeray_plot}-\ref{fig:Algorithm3_iteration_plot2} present the corresponding L\'{e}meray diagram and Algorithm \ref{alg:algorithm_3} iterates under this change in parameters. The L\'{e}meray diagram now reveals a single stable fixed point at $z^\dagger=0.005$, causing the iteration to decay to a non-oscillatory steady state. This steady-state behavior is further confirmed in Figure \ref{fig:Algorithm3_iteration_plot2}, where the conditional probability iterates, $u_n$ and $v_n$, settle at values near $1/2$, indicating stable, non-oscillatory dynamics with unpredictable event polarities. Likewise, the conditional ISI iterate, $w_n$, stabilizes at a significantly larger value than in the previous example (c.f.~Figure \ref{fig:Algorithm3_iteration_plot}), a difference that cannot be fully explained by the increase in the refractory period alone. Re-running Algorithm \ref{alg:algorithm_2} with these parameters confirms this change, namely, the ISI histograms for opposite and same-polarity event pairs are nearly identical and the total event rate has decreased to $r_\mathrm{total}=0.361$ (ev/s), representing a $16.2\%$ reduction from the previous example. These results demonstrate that the event stream dynamics have stabilized in a non-oscillatory state and that the diagnostic tools used in the previous section were able to predict these changes under a change in parameters.

\begin{figure}[htb]
    \centering
    \includegraphics[scale=1]{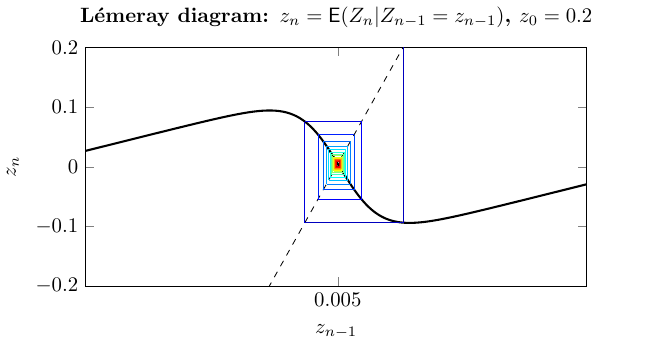}
    \caption{L\'{e}meray diagram for the recursion (\ref{eq:Lemeray_iteration}) and initial condition $z_0=0.2$ using a longer refractory period: $\rho=0.39$ (sec). Recursion path is colored according to recursion number: $n=0$ ({\color{blue}blue}) and $n=50$ ({\color{red}red}). The recursion approaches a stable fixed point at $z=0.005$.}
    \label{fig:Lemeray_plot}
\end{figure}

\begin{figure}[htb]
    \centering
    \includegraphics[scale=1]{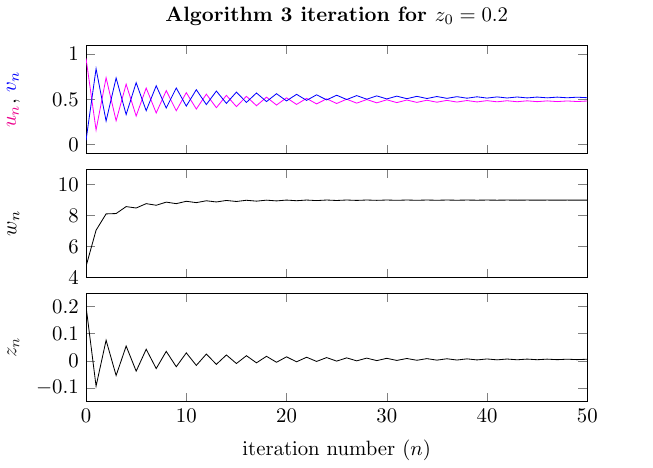}
    \caption{Plots of Algorithm \ref{alg:algorithm_3} iterates $u_n$, $v_n$, $w_n$, and $z_n$ for the initial condition $z_0=0.2$ using a longer refractory period: $\rho=0.39$ (sec).}
    \label{fig:Algorithm3_iteration_plot2}
\end{figure}

\FloatBarrier


\section{Discussion}

In this work, we laid the foundation for continuous-time modeling of neuromorphic sensors by developing a simple canonical representation for the special case of DC illumination, utilizing only four effective parameters. We analyzed the properties of this model, derived analytical expressions to describe its dynamics, and developed algorithms for its simulation. Through these simulations, we demonstrated that synthetic event streams generated according to the model reproduced key statistical characteristics observed in experimental data. Moreover, the derived analytical results provided insight into the underlying mechanisms shaping these characteristics.

This canonical representation should be viewed as an initial step toward more comprehensive continuous-time models that account for the many complexities of real-world event pixels. By re-framing the problem in continuous time with a simplified structure, we gained access to a broad set of mathematical tools, enabling closed-form analytical derivations. These insights not only aligned with experimental data but also provided predictive power, advancing our scientific understanding of these devices. However, our exploration of this canonical model is still in its early stages. Further work is needed to clarify how its parameters influence event stream dynamics before incorporating additional complexities. Establishing a solid understanding of this base case will provide the necessary foundation for analyzing the effects of more advanced features, such as non-DC signals.

Looking ahead, a key question remains: what comes next? Historically, the Poisson-Gaussian noise model for frame-based sensors has guided characterization methods---techniques for estimating the model parameters. This model, based on simple distributions with few parameters, accurately describes real-world frame-based sensor noise, making parameter inference straightforward. In contrast, even the most basic event pixel models, such as the one proposed here, involve multiple parameters and complex dynamics. Are neuromorphic sensors amenable to characterization in the same way as their frame-based counterparts?

Compounding this challenge, event data is highly quantized, discarding much of the information about the complex stochastic processes occurring within the pixel's signal processing chain. A similar issue arises in 1-bit quanta image sensors, where extreme quantization necessitates additional laboratory measurements and calibrated light sources to achieve reliable characterization \cite{Fossum_1b_QIS_char}, even though these sensors still conform to the relatively simple Poisson-Gaussian noise model. Can we develop robust characterization methods capable of extracting reliable parameter estimates in this context?

We do not yet have the answers to these questions, but we look forward to investigating them further in future work.


\section*{Acknowledgments}

The authors thank Nimrod Kruger of the International Center for Neuromorphic Systems at Western Sydney University for his presentation at the DEVCOM C5ISR Center, which inspired the development of the continuous-time model. The authors also thank Iosif Pinelis of the Department of Mathematical Sciences at Michigan Technological University for his insightful approach to the proof of Theorem \ref{thm:asumptotic_dist_V}.


\appendix

\section{Asymptotic normality of $V$: Proof}
\label{sec:asymptotic normality_proof}

\begin{theorem}
\label{thm:asumptotic_dist_V}
Let $V$ be the random variable defined by (\ref{eq:post_amp_voltage_model}). Then, as $L\to\infty$, $V\overset{d}{\to}\mathcal N(\mu_V,\sigma_V^2)$ where
\[
\mu_V=\beta_1\log\left(\frac{\xi_1 L+\xi_2}{\beta_2}+1\right)+\beta_3
\]
and
\[
\sigma_V^2=\frac{\beta_1^2(\xi_1 L+\xi_2)}{(\xi_1 L+\xi_2+\beta_2)^2}+\sigma^2.
\]
\end{theorem}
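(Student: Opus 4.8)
The plan is to reduce everything to a Poisson central limit theorem followed by a delta-method transfer through the logarithm, and then to fold in the independent Johnson noise. Write $\lambda=\lambda(L):=\xi_1 L+\xi_2$, so that $K\sim\operatorname{Poisson}(\lambda)$ with $\lambda\to\infty$ as $L\to\infty$, and set $h(x):=\beta_1\log(x/\beta_2+1)+\beta_3$, so that $V=h(K)+\sigma Z$ by (\ref{eq:post_amp_voltage_model}). Since $h'(x)=\beta_1/(x+\beta_2)$ one checks $\mu_V=h(\lambda)$ and $\sigma_V^2=h'(\lambda)^2\lambda+\sigma^2$; accordingly the claim to be proved is the standardized convergence $(V-\mu_V)/\sigma_V\overset{d}{\to}\mathcal N(0,1)$. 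I would organize the argument as: (i) Poisson CLT, (ii) delta method for $h$, (iii) characteristic-function combination.

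For (i), $(K-\lambda)/\sqrt\lambda\overset{d}{\to}\mathcal N(0,1)$ as $\lambda\to\infty$; this follows from a direct computation with $\varphi_K(t)=e^{\lambda(e^{it}-1)}$, since $\varphi_{(K-\lambda)/\sqrt\lambda}(t)=\exp\!\big(\lambda(e^{it/\sqrt\lambda}-1-it/\sqrt\lambda)\big)\to e^{-t^2/2}$, and requires no integrality assumption on $\lambda$. For (ii), I would show $U_\lambda:=(h(K)-h(\lambda))/s_\lambda\overset{d}{\to}\mathcal N(0,1)$, where $s_\lambda:=h'(\lambda)\sqrt\lambda=\beta_1\sqrt\lambda/(\lambda+\beta_2)$. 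Taylor-expanding, $h(K)-h(\lambda)=h'(\lambda)(K-\lambda)+\tfrac12 h''(\eta)(K-\lambda)^2$ for some $\eta$ between $K$ and $\lambda$; dividing by $s_\lambda$, the linear term is exactly $(K-\lambda)/\sqrt\lambda$, which converges by (i), so it suffices to show the rescaled quadratic remainder is $o_P(1)$.

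For (iii), independence of $K$ and $Z$ gives, with $a_\lambda:=s_\lambda/\sigma_V$ and $b_\lambda:=\sigma/\sigma_V$ (so $a_\lambda^2+b_\lambda^2=1$),
\[
\mathsf E\!\left[e^{\,it(V-\mu_V)/\sigma_V}\right]=\varphi_{U_\lambda}(a_\lambda t)\,e^{-b_\lambda^2 t^2/2}.
\]
Because $s_\lambda\to 0$, the coefficients converge: $a_\lambda^2\to 0$, $b_\lambda^2\to 1$ when $\sigma>0$, while $a_\lambda\equiv 1$, $b_\lambda\equiv 0$ when $\sigma=0$. In either case, combining this with $\varphi_{U_\lambda}\to e^{-t^2/2}$ and the tightness of $U_\lambda$ shows the right-hand side tends to $e^{-t^2/2}$, and Lévy's continuity theorem concludes. (Note the degenerate feature that for $\sigma>0$ the Poisson fluctuation is asymptotically swamped by the Johnson noise, which is consistent with the remark that the approximation is good already for moderate $L$ when $\sigma$ is not small.)

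The main obstacle is the remainder estimate in step (ii): a global bound $|h''|\le\beta_1/\beta_2^2$ on $[0,\infty)$ is far too crude, since it leads to dividing $\mathsf E(K-\lambda)^2=\lambda$ by $s_\lambda\asymp\lambda^{-1/2}$, which diverges. I would instead localize on the event $A_\lambda=\{|K-\lambda|\le\lambda^{2/3}\}$, which has probability $\to 1$ by Chebyshev: there $\eta\ge\lambda-\lambda^{2/3}$, so $|h''(\eta)|\lesssim\lambda^{-2}$ and $(K-\lambda)^2\le\lambda^{4/3}$, whence the remainder is $O(\lambda^{-2/3})$ and, divided by $s_\lambda\asymp\lambda^{-1/2}$, is $O(\lambda^{-1/6})=o(1)$; on $A_\lambda^c$ it vanishes in probability. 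Equivalently one may invoke an off-the-shelf delta-method-for-triangular-arrays statement, but this localization is the substantive content. A minor bookkeeping point worth isolating at the outset is that $h$ is defined and smooth on all of $[0,\infty)$ (indeed $\log(0/\beta_2+1)=0$), so the atom of $K$ at $0$ creates no difficulty.
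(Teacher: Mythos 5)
Your proposal is correct and follows essentially the same route as the paper's proof: a first-order Taylor (delta-method) expansion of the logarithm about $\mathsf EK=\lambda$, with the quadratic remainder controlled by localizing to an event $\{|K-\lambda|\le\lambda^{\gamma}\}$ for some $\gamma\in(1/2,1)$ (the paper uses $\gamma=5/8$ where you use $2/3$) whose complement vanishes by a Chebyshev/Markov bound, followed by the Poisson CLT. If anything, your step (iii) is more careful than the paper, which disposes of the independent Johnson noise with a one-line linearity remark and does not address the degenerate limit $s_\lambda/\sigma_V\to 0$ when $\sigma>0$; your characteristic-function combination with $a_\lambda^2+b_\lambda^2=1$ makes that final step explicit.
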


\begin{proof}
The random variable $V$ is described by a linear transformation of the variable
\[
Y=\log(K/\beta_2+1)
\]
plus an additive Gaussian noise. Therefore, only the asymptotic normality of $Y$ is required to prove the asymptotic normality of $V$.

For convenience we introduce $H=\xi_1 L+\xi_2$. Expanding $Y$ via a first-order Taylor approximation about $\mathsf EK=H$ gives
\[
\tilde Y=\log(H/\beta_2+1)+\frac{K-H}{H+\beta_2}
\]
and the standardized variable
\[
Z=\frac{Y-\mathsf E\tilde Y}{\sqrt{\mathsf{Var}\tilde Y}}=\frac{H+\beta_2}{\sqrt{H}}\log\left(1+\frac{K-H}{H+\beta_2}\right)
\]
where $\mathsf E\tilde Y=\log(H/\beta_2+1)$ and $\mathsf{Var}\tilde Y=H/(H+\beta_2)^2$. Taking the approach of Pinelis in \cite{Pinelis_481606} we divide the support of $K$ into the event $A_H=\{|K-H|<H^{5/8}\}$ and its complement $A_H^c=\{|K-H|\geq H^{5/8}\}$. It follows that
\[
\mathsf P(A_H^c)=\mathsf P(|K-H|\geq H^{5/8})=\mathsf P((K-H)^2\geq H^{5/4})
\]
and by Markov's inequality
\[
\mathsf P(A_H^c)\leq \frac{\mathsf E(K-H)^2}{H^{5/4}}=\frac{1}{H^{1/4}},
\]
showing that $\mathsf P(A_H^c)\to 0$ as $H\to\infty$. This result proves $Z\overset{d}{\to}0$ on $A_H^c$ and thus the asymptotic distribution of $Z$ only depends on the asymptotic behavior restricted to the event $A_H$. We write
\[
\begin{aligned}
Z 
&=\frac{H+\beta_2}{\sqrt{H}}\left(\frac{K-H}{H+\beta_2}+\mathcal O\left(\frac{(K-H)^2}{(H+\beta_2)^2}\right)\right)\\
&=\frac{K-H}{\sqrt H}+\mathcal O\left(\frac{(K-H)^2}{H^{3/2}}\right)\\
&=\frac{K-H}{\sqrt H}+\mathcal O\left(\frac{1}{H^{1/4}}\right),
\end{aligned}
\]
where the last equality comes from the definition of $A_H$ yielding $(K-H)^2<H^{5/4}$. Consequently, as $H\to\infty$ all higher-order terms in the expansion of $Z$ vanish on $A_H$. It follows that as $H\to\infty$
\[
Z\overset{d}{\to}\frac{K-H}{\sqrt H}\overset{d}{\to}\mathcal N(0,1),
\]
and $Y\overset{d}{\to}\mathcal N\left(\log(H/\beta_2+1),H/(H+\beta_2)^2\right)$.
This completes the proof.
\end{proof}


\section{Exit statistics of the process $V_t^{\ell p}$ from the interval $(a,b)$}
\label{sec:OU_process_exit_stats_theoretical}

The purpose of the section is to address the theoretical aspects of the process $V_t^{\ell p}$ escaping an interval $(a,b)$ with $b>a$. To that end, let $V_t^{\ell p,v}$ denote the process (\ref{eq:low_pass_Vt_SDE}) with initial voltage $v\in(a,b)$ at time $t=0$ and $\tau_{a,b}^v=\inf\{t\geq 0:V_t^{\ell p,v}\notin(a,b)\}$ denote the time the process exits the interval. We are then interested in studying the statistics of the exit time and exit voltage $(\tau_{a,b}^v,V_{\tau_{a,b}^v}^{\ell p,v})$. To aid us in this analysis, we first state a useful relationship to a standardized version of this problem.

\begin{proposition}[location-scale formulation]
\label{prop:location-scale-formulation}
\[
V_t^{\ell p,v}=\mu_V+\omega\sigma_VX_t^x,
\]
where $X_t^x$ is the solution to
\[
\mathrm dX_t=-\omega X_t\,\mathrm dt+\mathrm dW_t
\]
with initial position $x=(v-\mu_V)/(\omega\sigma_V)$.
\end{proposition}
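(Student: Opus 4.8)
The plan is to obtain the result by a direct affine change of variables applied to the stochastic differential equation~(\ref{eq:low_pass_Vt_SDE}), combined with strong uniqueness for the resulting linear SDE.

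First I would introduce the deterministic affine map $\phi(v)=(v-\mu_V)/(\omega\sigma_V)$ and define the process $X_t\coloneqq\phi(V_t^{\ell p,v})$. Since $\phi$ is affine, It\^{o}'s lemma reduces to the ordinary chain rule (the second-order term vanishes), so $\mathrm dX_t=\frac{1}{\omega\sigma_V}\,\mathrm dV_t^{\ell p,v}$. Substituting the dynamics $\mathrm dV_t^{\ell p,v}=\omega(\mu_V-V_t^{\ell p,v})\,\mathrm dt+\omega\sigma_V\,\mathrm dW_t$ from~(\ref{eq:low_pass_Vt_SDE}) and using the identity $\mu_V-V_t^{\ell p,v}=-\omega\sigma_V X_t$ gives $\mathrm dX_t=-\omega X_t\,\mathrm dt+\mathrm dW_t$, driven by the same Wiener process $W_t$. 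The initial condition transforms as $X_0=\phi(v)=(v-\mu_V)/(\omega\sigma_V)=x$, matching the statement.

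It then remains to identify $X_t$ with $X_t^x$. The drift $x\mapsto-\omega x$ and the unit diffusion coefficient are globally Lipschitz and of linear growth, so the SDE $\mathrm dX_t=-\omega X_t\,\mathrm dt+\mathrm dW_t$ with $X_0=x$ admits a unique strong solution; hence $X_t=X_t^x$ almost surely for all $t\geq 0$, and rearranging $X_t=\phi(V_t^{\ell p,v})$ yields $V_t^{\ell p,v}=\mu_V+\omega\sigma_V X_t^x$.

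As for obstacles, there is essentially none of substance: the argument is a one-line change of variables, and the only points requiring a word of care are (i) that both processes are adapted to and driven by the \emph{same} $W_t$, so that the pathwise identity---not merely equality in law---holds, and (ii) the appeal to the standard existence-and-uniqueness theorem to pin down $X_t^x$. One could equally run the argument in the reverse direction, starting from $X_t^x$ and verifying by It\^{o}'s lemma that $\mu_V+\omega\sigma_V X_t^x$ solves~(\ref{eq:low_pass_Vt_SDE}) with initial value $v$, then invoking uniqueness for that SDE; the two routes are interchangeable.
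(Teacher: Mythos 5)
Your proof is correct. The paper states Proposition \ref{prop:location-scale-formulation} without any proof, and your argument---the affine change of variables $X_t=(V_t^{\ell p,v}-\mu_V)/(\omega\sigma_V)$ applied to (\ref{eq:low_pass_Vt_SDE}), yielding $\mathrm dX_t=-\omega X_t\,\mathrm dt+\mathrm dW_t$ with $X_0=x$, pinned down by strong uniqueness---is precisely the standard verification the authors evidently have in mind, including the correct observation that the identity is pathwise since both processes are driven by the same $W_t$.
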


\begin{proposition}[scaling law for exit times and exit positions]
\label{prop:location_scale_for_exit_times_and_positions}
\[
\left(\tau_{a,b}^v,V_{\tau_{a,b}^v}^{\ell p,v}\right)\overset{d}{=}\left(\tau_{\ell,u}^x,\mu_V+\omega\sigma_VX_{\tau_{\ell,u}^x}^x\right),
\]
where $\tau_{\ell,u}^x=\inf\{t\geq 0:X_t^x\notin(\ell,u)\}$ with
\[
(\ell,u)=\left(\frac{a-\mu_V}{\omega\sigma_V},\frac{b-\mu_V}{\omega\sigma_V}\right).
\]
\end{proposition}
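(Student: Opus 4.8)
The plan is to treat this as an elementary change of variables applied to first‑passage times, leaning entirely on Proposition~\ref{prop:location-scale-formulation}. Write $h(y)=\mu_V+\omega\sigma_V y$ for the affine map appearing there, so that $V_t^{\ell p,v}=h(X_t^x)$ for all $t\ge 0$, with $x=h^{-1}(v)=(v-\mu_V)/(\omega\sigma_V)$. Since $\omega>0$ and $\sigma_V>0$, the map $h$ is a strictly increasing, continuous bijection of $\mathbb R$ onto itself; in particular it carries the open interval $(\ell,u)$ with $\ell=h^{-1}(a)$ and $u=h^{-1}(b)$ exactly onto $(a,b)$, and these $\ell,u$ are precisely the normalized endpoints stated in the proposition.

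First I would observe that, because $h$ is a bijection mapping $(\ell,u)$ onto $(a,b)$, the pathwise equivalence $V_t^{\ell p,v}\in(a,b)\iff X_t^x\in(\ell,u)$ holds for every $t$, hence also $V_t^{\ell p,v}\notin(a,b)\iff X_t^x\notin(\ell,u)$. Consequently the two random sets $\{t\ge 0:\,V_t^{\ell p,v}\notin(a,b)\}$ and $\{t\ge 0:\,X_t^x\notin(\ell,u)\}$ coincide as subsets of $[0,\infty)$ along every sample path, so their infima agree: $\tau_{a,b}^v=\tau_{\ell,u}^x$ almost surely. Continuity of the Ornstein–Uhlenbeck sample paths guarantees that this common first‑exit time is a genuine stopping time and that the process sits at a boundary point of the relevant interval at that instant, which is all that is needed to make the evaluation below unambiguous.

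Next I would evaluate the voltage process at this common exit time: by Proposition~\ref{prop:location-scale-formulation}, $V_{\tau_{a,b}^v}^{\ell p,v}=h\big(X_{\tau_{a,b}^v}^x\big)=\mu_V+\omega\sigma_V X_{\tau_{\ell,u}^x}^x$, using $\tau_{a,b}^v=\tau_{\ell,u}^x$ from the previous step. Assembling the two coordinates gives the pathwise identity
\[
\big(\tau_{a,b}^v,\,V_{\tau_{a,b}^v}^{\ell p,v}\big)=\big(\tau_{\ell,u}^x,\,\mu_V+\omega\sigma_V X_{\tau_{\ell,u}^x}^x\big),
\]
which in particular yields the claimed equality in distribution. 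If one prefers to read Proposition~\ref{prop:location-scale-formulation} as furnishing $X_t^x$ only up to equality in law (i.e.\ driven by a possibly different Wiener process), the same chain of reasoning inside that coupling still delivers the distributional statement, which is all that is asserted.

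I do not expect a genuine obstacle here: the only point requiring care is the interchange of the affine transformation with the first‑passage operator, which is immediate once one notes that $h$ is an increasing homeomorphism, together with the mild path‑regularity remark ensuring the exit time and exit position are well defined. The substantive modeling content—that normalization eliminates $\mu_V$ and folds the factor $\omega\sigma_V$ into the thresholds—is then just the explicit form of $\ell$ and $u$.
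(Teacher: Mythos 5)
Your argument is correct and is exactly the reasoning the paper relies on: the paper states this proposition without an explicit proof, treating it as an immediate consequence of Proposition~\ref{prop:location-scale-formulation}, and your write-up simply makes that implicit step precise (the affine map $h$ is an increasing homeomorphism carrying $(\ell,u)$ onto $(a,b)$, so the exit times coincide pathwise and the exit positions correspond under $h$). No gaps.
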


Proposition \ref{prop:location_scale_for_exit_times_and_positions} states that the statistics of the exit times and exit voltages of the process $V_t^{\ell p}$ can be deduced from those of the standardized process $X_t$. From now on, we will adopt the simpler notation $(\tau_{\ell,u}^x,X_{\tau_{\ell,u}^x}^x)\mapsto (\tau^x,X_{\tau^x}^x)$ understanding that they represent the same quantities.

We now turn our attention to deriving representations of the distributions and statistics pertaining to $\tau^x$ and $X_{\tau^x}^x$. We first state the following prerequisites.

\begin{definition}[infinitesimal generator of $X_t^x$]
\label{def:OU_generator}
The generator of the process $X_t^x$ is given by the differential operator
\[
\mathscr L\coloneqq\frac{1}{2}\partial_x^2-\omega x\partial_x.
\]
\end{definition}

\begin{lemma}[Kolmogorov backward equation]
\label{lem:KBE}
Let $g(x,t)=\mathsf Ef(X_t^x)$ for $f\in C_0^2(\Bbb R)$. Then $g$ satisfies
\[
\begin{aligned}
& \mathscr Lg=\partial_tg,\quad t>0,\ x\in\Bbb R\\
& g(x,0)=f(x),\quad x\in\Bbb R.
\end{aligned}
\]
\end{lemma}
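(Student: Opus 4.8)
The statement to prove is the Kolmogorov backward equation (Lemma~\ref{lem:KBE}): for $f \in C_0^2(\mathbb{R})$ and $g(x,t) = \mathsf{E} f(X_t^x)$, one has $\mathscr{L} g = \partial_t g$ with $g(x,0) = f(x)$.

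\medskip

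The plan is to proceed via It\^o's formula combined with the Markov (semigroup) property of the diffusion $X_t^x$ solving $\mathrm dX_t = -\omega X_t\,\mathrm dt + \mathrm dW_t$. First, the initial condition $g(x,0) = \mathsf{E}f(X_0^x) = f(x)$ is immediate since $X_0^x = x$. For the PDE itself, I would fix $t > 0$ and apply It\^o's lemma to $f(X_s^x)$ over $s \in [0,t]$: since the drift is $b(x) = -\omega x$ and the diffusion coefficient is $1$, It\^o gives
\[
f(X_t^x) = f(x) + \int_0^t \mathscr{L}f(X_s^x)\,\mathrm ds + \int_0^t f'(X_s^x)\,\mathrm dW_s,
\]
where $\mathscr{L} = \tfrac12 \partial_x^2 - \omega x \partial_x$ is exactly the generator from Definition~\ref{def:OU_generator}. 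Because $f \in C_0^2$, $f'$ is bounded, so the stochastic integral is a true martingale with zero expectation; taking expectations yields $g(x,t) = f(x) + \int_0^t \mathsf{E}[\mathscr{L}f(X_s^x)]\,\mathrm ds$. Differentiating in $t$ gives $\partial_t g(x,t) = \mathsf{E}[\mathscr{L}f(X_t^x)] = (P_t \mathscr{L} f)(x)$, where $P_t$ denotes the transition semigroup. The remaining task is to interchange $\mathscr{L}$ and $P_t$, i.e.\ to show $P_t \mathscr{L} f = \mathscr{L} P_t f = \mathscr{L} g$, which follows from the semigroup/Markov property: $P_{t+h} = P_t P_h = P_h P_t$, so $\partial_t g = \lim_{h \to 0} h^{-1}(P_h - I)P_t f$, and one identifies this limit with $\mathscr{L}(P_t f)$ using the explicit smoothness of the OU transition density (the Gaussian kernel from~\eqref{eq:Vlp_transition_dist}, specialized to this standardized process) which makes $g(\cdot,t)$ smooth enough for $\mathscr{L}$ to act on it classically.

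\medskip

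Alternatively — and perhaps more cleanly for a self-contained argument — I would avoid the abstract semigroup commutation by using the explicit transition density. The standardized OU process has Gaussian transition density $p(t,x,y)$ with mean $xe^{-\omega t}$ and variance $\tfrac{1}{2\omega}(1-e^{-2\omega t})$ (the $\omega\sigma_V \to $ unit-scale specialization of~\eqref{eq:Vlp_transition_dist} with $\mu_V = 0$, up to the factor conventions in the SDE; concretely $\mathrm dX_t = -\omega X_t\,\mathrm dt + \mathrm dW_t$ gives stationary variance $1/(2\omega)$). Then $g(x,t) = \int_{\mathbb R} f(y) p(t,x,y)\,\mathrm dy$, and since $p$ is smooth in $(t,x)$ for $t>0$ with all derivatives integrable against the bounded $f$, one may differentiate under the integral sign. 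A direct computation shows $p$ itself satisfies the backward equation $\partial_t p = \tfrac12 \partial_x^2 p - \omega x \partial_x p$ in the variables $(t,x)$ — this is a routine check on the Gaussian — and therefore $g$ inherits it: $\partial_t g = \int f(y)\,\partial_t p\,\mathrm dy = \int f(y)\,\mathscr{L}_x p\,\mathrm dy = \mathscr{L}_x g$. Continuity of $g(x,t) \to f(x)$ as $t \downarrow 0$ follows since $p(t,x,\cdot)$ concentrates at $x$ and $f$ is continuous and bounded.

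\medskip

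The main obstacle is the justification of differentiating under the integral / the interchange of the generator with the expectation — i.e.\ establishing enough regularity of $g$ in $(t,x)$ that $\mathscr{L}g$ makes classical sense and equals $\mathsf{E}[\mathscr{L}f(X_t^x)]$. The It\^o route handles the right-hand side ($\partial_t g$) cleanly, but rigorously showing this equals $\mathscr{L}$ applied to $g$ requires either the explicit smoothing estimates on the OU kernel or an appeal to standard semigroup theory (Dynkin's formula / the fact that $C_0^2$ lies in the domain of the generator for non-degenerate diffusions with smooth coefficients). I would opt for the explicit-kernel argument since the OU density is available in closed form here, making all the needed dominated-convergence bounds elementary, and I would only sketch these bounds rather than belabor them, since they are entirely standard for Gaussian kernels.
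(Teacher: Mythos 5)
The paper states this lemma without proof---it is the classical Kolmogorov backward equation for a diffusion (cf.\ the result in \O ksendal that the paper already cites in the proof of Theorem \ref{thm:hitting_time_prob_standardized})---so there is no in-paper argument to compare against. Your proposal is correct and is the standard textbook derivation: the It\^o route correctly produces $\partial_t g = \mathsf E[\mathscr Lf(X_t^x)]$ (the stochastic integral is a genuine martingale since $f'$ is bounded for $f\in C_0^2$), and you rightly identify the only nontrivial step as commuting the generator with the semigroup, i.e.\ showing $P_t\mathscr Lf=\mathscr LP_tf$ so that the equation can be read as $\mathscr Lg=\partial_tg$. Your second, explicit-kernel route disposes of that issue cleanly for this particular process: the standardized OU transition density is the Gaussian with mean $xe^{-\omega t}$ and variance $\tfrac{1}{2\omega}(1-e^{-2\omega t})$ (your scaling is right), it satisfies the backward PDE in $(t,x)$ by direct computation, and dominated convergence justifies differentiating under the integral against a bounded $f$. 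Either route suffices; the kernel argument is the more self-contained choice here and is the one I would keep.
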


We now have all the results needed to describe the various statistics related to the exit times and positions.

\begin{lemma}[exit time/position distributions]
\label{lem:exit_time_dists}
Let $g_1(x,t)\coloneqq\mathsf P(\tau^x\leq t)$, $g_2(x,t)\coloneqq\mathsf P(\tau^x\leq t,X_{\tau^x}^x=\ell)$, and $g_3(x,t)\coloneqq\mathsf P(\tau^x\leq t,X_{\tau^x}^x=u)$. Then $g_{\boldsymbol\cdot}$ satisfies
\[
\frac{1}{2}\partial_x^2g_{\boldsymbol\cdot}(x,t)-\omega x\partial_xg_{\boldsymbol\cdot}(x,t)=\partial_tg_{\boldsymbol\cdot}(x,t)
\]
for the initial conditions
\[
\begin{aligned}
C_1:g_1(\ell,t)=1,\ g_1(u,t)=1,\ \text{and}\ g_1(x,0)=0\ \text{for}\ x\in(\ell,u),\\
C_2:g_2(\ell,t)=1,\ g_2(u,t)=0,\ \text{and}\ g_2(x,0)=0\ \text{for}\ x\in(\ell,u),\\
C_3:g_3(\ell,t)=0,\ g_3(u,t)=1,\ \text{and}\ g_3(x,0)=0\ \text{for}\ x\in(\ell,u).
\end{aligned}
\]
\end{lemma}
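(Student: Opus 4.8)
The plan is to obtain each $g_i$ as the expectation $\mathsf E f_i(X_t^x)$ of an appropriate functional of the stopped process and then invoke Lemma~\ref{lem:KBE} (Kolmogorov backward equation) applied to the killed/stopped diffusion on $(\ell,u)$. Concretely, I would first consider the process $X_t^x$ killed upon exiting $(\ell,u)$, i.e.\ work with $\hat X_t^x = X_{t\wedge\tau^x}^x$. For $g_1$, note that $g_1(x,t) = \mathsf P(\tau^x \le t) = \mathsf E\,\mathds 1_{\tau^x \le t}$; equivalently, running the diffusion with absorption at $\ell$ and $u$, the quantity $1 - g_1(x,t)$ is the probability of still being inside $(\ell,u)$ at time $t$, which is $\mathsf E\,\mathds 1_{(\ell,u)}(\hat X_t^x)$. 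The backward equation for such survival probabilities is exactly $\mathscr L u = \partial_t u$ on the open interval, with the absorbing (Dirichlet) boundary behavior encoded as $g_1(\ell,t) = g_1(u,t) = 1$ for $t>0$ and initial condition $g_1(x,0)=0$ for $x\in(\ell,u)$. This is the standard Feynman--Kac / Dynkin argument for exit times of a one-dimensional diffusion.

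Next I would treat $g_2$ and $g_3$ by the same device but tracking \emph{which} boundary is hit. Write $g_2(x,t) = \mathsf E\big[\mathds 1_{\tau^x \le t}\,\mathds 1_{X_{\tau^x}^x = \ell}\big]$. Since the generator $\mathscr L$ acts only in the spatial variable and the event $\{X_{\tau^x}^x=\ell\}$ is determined by the absorbed path, the function $g_2$ solves the same PDE $\tfrac12\partial_x^2 g_2 - \omega x\,\partial_x g_2 = \partial_t g_2$ on $(\ell,u)\times(0,\infty)$; the boundary data now distinguish the two endpoints. If the process starts at $x=\ell$ it exits immediately at $\ell$, so $g_2(\ell,t)=1$; if it starts at $x=u$ it exits immediately at $u\ne\ell$, so $g_2(u,t)=0$; and for interior $x$ no exit has occurred at $t=0$, so $g_2(x,0)=0$. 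This yields condition set $C_2$, and $C_3$ follows by the symmetric argument with the roles of $\ell$ and $u$ swapped. As a sanity check one observes $g_1 = g_2 + g_3$, consistent with $C_1 = C_2 + C_3$.

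To make the Dynkin/Feynman--Kac step rigorous one cannot apply Lemma~\ref{lem:KBE} verbatim, because that lemma is stated for $f\in C_0^2(\mathbb R)$ and an unstopped process, whereas here the relevant functionals are indicators and the process is stopped at $\tau^x$. The clean route is to fix $t>0$, let $h(x)\coloneqq g_i(x,t)$, and apply Dynkin's formula to $h(\hat X_s^x)$ on $[0,t]$, using that $\mathscr L h = 0$ would give the \emph{elliptic} (stationary) version; for the time-dependent statement one instead applies It\^o's formula to $(s,x)\mapsto g_i(x,t-s)$ along the stopped path and takes expectations, the martingale property forcing the parabolic PDE. Smoothness of $g_i$ in the interior on $(0,\infty)$ follows from interior parabolic regularity (hypoellipticity of $\partial_t - \mathscr L$, which is uniformly elliptic in $x$ here since the diffusion coefficient is the constant $\tfrac12$), so the manipulations are justified; continuity up to the boundary in the stated one-sided sense gives the boundary conditions.

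The main obstacle I anticipate is the careful justification of the boundary and initial conditions rather than the PDE itself: one must argue that $g_i(x,t)\to 1$ (resp.\ $0$) as $x$ approaches the appropriate endpoint, which amounts to showing the exit time $\tau^x \to 0$ in probability as $x\to\ell$ or $x\to u$ --- a standard but nontrivial fact for nondegenerate diffusions (the boundary points are regular), typically proven via a comparison with Brownian motion or an explicit barrier/supersolution estimate near the boundary. Likewise, $g_i(x,0^+)=0$ for interior $x$ requires that $\mathsf P(\tau^x \le t)\to 0$ as $t\downarrow 0$ for $x$ bounded away from $\{\ell,u\}$, which again follows from a small-time large-deviation or reflection-type estimate. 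Once these continuity statements are in hand, the identification of $g_i$ as \emph{the} solution of the stated boundary value problem is immediate, and uniqueness (should it be needed elsewhere) follows from the maximum principle for the parabolic operator $\partial_t - \mathscr L$.
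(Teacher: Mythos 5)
Your proposal is correct and takes essentially the same route as the paper: the PDE is obtained from the Kolmogorov backward equation (Lemma~\ref{lem:KBE}) applied to the diffusion absorbed at $\{\ell,u\}$, and the side conditions are read off from immediate exit at the endpoints and no exit at $t=0$. The paper's own proof is a one-liner that declares the PDE ``an immediate consequence'' of Lemma~\ref{lem:KBE}; your additional care---noting that the lemma as stated applies to $C_0^2$ data and an unstopped process, routing the argument through It\^{o}/Dynkin on the stopped path, and justifying the boundary and initial limits via regularity of the endpoints---fills in precisely the steps the paper leaves implicit.
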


\begin{proof}
The partial differential equation for $g_{\boldsymbol\cdot}$ is an immediate consequence of the KBE equation (Lemma \ref{lem:KBE}). The boundary conditions for $g_1$ result from $\mathsf P(\tau^x\leq 0)=0$ for $x\in(\ell,u)$ and $\mathsf P(\tau^x\leq t)=1$ for $x\in\{\ell,u\}$. The boundary conditions for $g_2$ and $g_3$ follow in a similar manner.  This completes the proof.
\end{proof}

\begin{corollary}[joint distribution of exit time and exit position]
\label{cor:exit_time_position_joint_dist}
\[
\mathsf P(\tau^x\leq t,X_{\tau^x}^x\leq s)=g_2(x,t)\mathds 1_{\ell\leq s}+g_3(x,t)\mathds 1_{u\leq s},
\]
where $g_{\boldsymbol\cdot}$ is given in Lemma \ref{lem:exit_time_dists}.
\end{corollary}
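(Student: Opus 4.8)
The plan is to exploit the fact that the standardized Ornstein--Uhlenbeck process $X_t^x$ has almost surely continuous sample paths, so that at its first exit from the open interval $(\ell,u)$ it can only land on one of the two endpoints. First I would record that $\tau^x<\infty$ almost surely: since the diffusion coefficient equals $1$ and the drift $-\omega x$ is bounded on the bounded interval $(\ell,u)$, a comparison with Brownian motion (equivalently, an inspection of the scale function of $X_t^x$) shows the process leaves $(\ell,u)$ in finite time with probability one, so $X_{\tau^x}^x$ is well defined. By path continuity and the definition of $\tau^x$ as an infimum over $\{t\ge 0:X_t^x\notin(\ell,u)\}$, the process cannot jump strictly past an endpoint, hence $X_{\tau^x}^x\in\{\ell,u\}$ almost surely, and the events $\{X_{\tau^x}^x=\ell\}$ and $\{X_{\tau^x}^x=u\}$ partition the sample space up to a null set.

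Next I would split on the location of $s$ relative to the endpoints. If $s<\ell$ then $\{X_{\tau^x}^x\le s\}$ is null while both indicators $\mathds 1_{\ell\le s}$ and $\mathds 1_{u\le s}$ vanish, so the identity holds trivially. If $\ell\le s<u$ then $\{X_{\tau^x}^x\le s\}=\{X_{\tau^x}^x=\ell\}$, so
\[
\mathsf P(\tau^x\le t,X_{\tau^x}^x\le s)=\mathsf P(\tau^x\le t,X_{\tau^x}^x=\ell)=g_2(x,t),
\]
which matches the right-hand side since there only $\mathds 1_{\ell\le s}$ is active. Finally, if $s\ge u$ then $\{X_{\tau^x}^x\le s\}=\{X_{\tau^x}^x=\ell\}\uplus\{X_{\tau^x}^x=u\}$, and finite additivity of $\mathsf P$ gives $g_2(x,t)+g_3(x,t)$, again matching the right-hand side where both indicators are active. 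Identifying each piece with $g_2$ or $g_3$ uses nothing beyond their definitions in Lemma \ref{lem:exit_time_dists}.

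I do not expect a genuine obstacle here: the content of the statement is simply that the exit position is a two-valued random variable, so its joint distribution function with $\tau^x$ is a sum of two atoms weighted by the sub-probabilities $g_2$ and $g_3$. The only points needing a little care are the almost-sure finiteness of $\tau^x$ (so that $X_{\tau^x}^x$ is defined) and the claim that continuity forces the exit to occur exactly at $\ell$ or $u$ rather than strictly outside $[\ell,u]$; both are standard facts for one-dimensional diffusions with continuous paths on a bounded interval and can be cited directly.
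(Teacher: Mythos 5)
Your proof is correct and coincides with the paper's reasoning: the paper states this corollary without proof, treating it as immediate from the definitions of $g_2$ and $g_3$ in Lemma \ref{lem:exit_time_dists} once one notes that path continuity confines $X_{\tau^x}^x$ to $\{\ell,u\}$, which is exactly the case analysis you carry out. Your added care about the almost-sure finiteness of $\tau^x$ and the two-point support of the exit position is a sound elaboration of what the paper leaves implicit.
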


\begin{theorem}[marginal distribution of the exit position]
\label{thm:hitting_time_prob_standardized}
The distribution of the exit position $X_{\tau^x}^x$ is given by
\[
\mathsf P(X_{\tau^x}^x=\ell)=
\frac{\operatorname{erfi}(\sqrt{\omega}u)-\operatorname{erfi}(\sqrt{\omega} x)}{\operatorname{erfi}(\sqrt{\omega} u)-\operatorname{erfi}(\sqrt{\omega}\ell)}
\]
and
\[
\mathsf P(X_{\tau^x}^x=u)=
\frac{\operatorname{erfi}(\sqrt{\omega}x)-\operatorname{erfi}(\sqrt{\omega} \ell)}{\operatorname{erfi}(\sqrt{\omega} u)-\operatorname{erfi}(\sqrt{\omega}\ell)},
\]
where $\operatorname{erfi}(z)$ is the imaginary error function.
\end{theorem}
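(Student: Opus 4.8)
The plan is to reduce the computation of the exit-position law to a scalar boundary value problem for the generator $\mathscr L$ and then to solve that ODE explicitly. Write $h(x)\coloneqq\mathsf P(X_{\tau^x}^x=\ell)$ for $x\in[\ell,u]$. First I would record the (standard) fact that the OU process started in the bounded interval $(\ell,u)$ leaves it almost surely in finite time: the drift $-\omega x$ is bounded on $[\ell,u]$ and the diffusion coefficient is nondegenerate, so $\tau^x<\infty$ a.s.\ (indeed $\mathsf E\tau^x<\infty$, cf.\ Theorem~\ref{thm:waiting_time_expectation_conditioned_on_in_threshold}). Consequently $X_{\tau^x}^x\in\{\ell,u\}$ a.s., and, letting $t\to\infty$ in the definition of $g_2$ from Lemma~\ref{lem:exit_time_dists}, $g_2(x,t)\uparrow\mathsf P(\tau^x<\infty,\,X_{\tau^x}^x=\ell)=h(x)$, with the analogous statement $g_3(x,t)\uparrow 1-h(x)$.

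Next I would pass to the limit $t\to\infty$ in the backward equation $\tfrac12\partial_x^2 g_2-\omega x\,\partial_x g_2=\partial_t g_2$. Since $g_2(x,t)$ is monotone and bounded in $t$, its time derivative tends to $0$ and the spatial derivatives converge to those of $h$, so $h$ solves the stationary problem $\tfrac12 h''-\omega x\,h'=0$ on $(\ell,u)$ with $h(\ell)=1$, $h(u)=0$. Equivalently, and more robustly, one applies the optional stopping theorem to the bounded martingale $\mathfrak s(X_{t\wedge\tau^x}^x)$, where $\mathfrak s$ is any nonconstant solution of $\mathscr L\mathfrak s=0$; this yields $\mathfrak s(x)=h(x)\mathfrak s(\ell)+(1-h(x))\mathfrak s(u)$, hence the same boundary-value characterization of $h$.

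I would then solve the ODE. Treating it as a first-order linear equation for $h'$, $(\log h')'=2\omega x$, so $h'(x)=A\,e^{\omega x^2}$ for a constant $A$, and integrating, $h(x)=A\int_\ell^x e^{\omega t^2}\,\mathrm dt+B$. The boundary conditions give $B=1$ and $A=-\big(\int_\ell^u e^{\omega t^2}\,\mathrm dt\big)^{-1}$, whence $h(x)=\big(\int_x^u e^{\omega t^2}\,\mathrm dt\big)\big/\big(\int_\ell^u e^{\omega t^2}\,\mathrm dt\big)$. Finally, the substitution $s=\sqrt\omega\,t$ together with $\operatorname{erfi}(z)=\tfrac{2}{\sqrt\pi}\int_0^z e^{s^2}\,\mathrm ds$ turns each integral $\int_a^b e^{\omega t^2}\,\mathrm dt$ into $\tfrac{\sqrt\pi}{2\sqrt\omega}\big(\operatorname{erfi}(\sqrt\omega\,b)-\operatorname{erfi}(\sqrt\omega\,a)\big)$; the common prefactor cancels in the ratio, yielding the stated formula for $\mathsf P(X_{\tau^x}^x=\ell)$, and $\mathsf P(X_{\tau^x}^x=u)=1-h(x)$ gives the second expression. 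The denominator is strictly positive because $\operatorname{erfi}$ is strictly increasing and $\ell<u$.

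The main obstacle I anticipate is the rigor of the limit interchange in the backward equation---justifying that $\partial_t g_2(x,t)\to 0$ and that one may differentiate under the limit in $x$---which is why I would prefer to present the argument via optional stopping applied to $\mathfrak s(X_{t\wedge\tau^x}^x)$; there the only nontrivial inputs are the a.s.\ finiteness of $\tau^x$ and the boundedness of the martingale on $[\ell,u]$, both immediate. Everything after the ODE characterization is a routine quadrature.
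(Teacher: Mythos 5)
Your proposal is correct and, in its preferred form (optional stopping applied to $\mathfrak s(X_{t\wedge\tau^x}^x)$ with $\mathscr L\mathfrak s=0$), is essentially the paper's own argument: the paper simply exhibits $\operatorname{erfi}(\sqrt\omega\,x)$ as the harmonic function and applies Doob's optional stopping theorem to the resulting martingale, whereas you derive that same scale function by quadrature of $\tfrac12 h''-\omega x h'=0$. Your added care about the a.s.\ finiteness of $\tau^x$ and the boundedness of the stopped martingale is a justification the paper leaves implicit, but it does not change the route.
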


\begin{proof}
For a function $f\in C_0^2(\Bbb R)$, the process
\[
M_t^x=f(X_t^x)-\int_0^t\mathscr Lf(X_s^x)\,\mathrm ds
\]
is a martingale \cite[Theorem 8.3.1]{Oksendal_2000}. Since $\mathscr L\operatorname{erfi}(\sqrt{\omega}x)=0$, we make this choice for $f$ resulting in the martingale
\[
M_t^x=\operatorname{erfi}(\sqrt{\omega}X_t^x).
\]
By Doob's optional stopping theorem for martingales
\[
\mathsf EM_{\tau^x}^x=\mathsf EM_0^x=\operatorname{erfi}(\sqrt{\omega}x).
\]
But by the law of total expectation
\[
\mathsf EM_{\tau^x}^x=\mathsf E(M_{\tau^x}^x|X_{\tau^x}^x=\ell)\mathsf P(X_{\tau^x}^x=\ell)+\mathsf E(M_{\tau^x}^x|X_{\tau^x}^x=u)(1-\mathsf P(X_{\tau^x}^x=\ell)),
\]
where $\mathsf E(M_{\tau^x}^x|X_{\tau^x}^x=\boldsymbol\cdot)=\operatorname{erfi}(\sqrt{\omega}\,\boldsymbol\cdot)$. Solving this expression for $\mathsf P(X_{\tau^x}^x=\ell)$ and substituting the appropriate quantities gives the final form for $\mathsf P(X_{\tau^x}^x=\ell)$. The result for $\mathsf P(X_{\tau^x}^x=u)$ is immediately realized since $\mathsf P(X_{\tau^x}^x=u)=1-\mathsf P(X_{\tau^x}^x=\ell)$. The proof is now complete.
\end{proof}

\begin{corollary}[expected value of the exit position]
\label{cor:exit_pos_expected_value}
\[
\mathsf EX_{\tau^x}^x=\ell\,\mathsf P(X_{\tau^x}^x=\ell)+u\,\mathsf P(X_{\tau^x}^x=u),
\]
with $\mathsf P(X_{\tau^x}^x=s)$ given in Theorem \ref{thm:hitting_time_prob_standardized}.
\end{corollary}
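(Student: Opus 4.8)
The plan is to recognize that this corollary is essentially immediate once one observes that the exit position $X_{\tau^x}^x$ is a two-valued random variable, so the proof reduces to the definition of expectation (equivalently, the law of total expectation) combined with Theorem \ref{thm:hitting_time_prob_standardized}.

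First I would recall that $X_t^x$ solves $\mathrm dX_t=-\omega X_t\,\mathrm dt+\mathrm dW_t$ and therefore has almost surely continuous sample paths. Since $\tau^x$ is the first time the path leaves the \emph{open} interval $(\ell,u)$, path continuity forces $X_{\tau^x}^x\in\{\ell,u\}$ on the event $\{\tau^x<\infty\}$. I would then note that $\tau^x<\infty$ almost surely; this can be justified by recurrence of the Ornstein--Uhlenbeck process, but in fact it is already implicit in Theorem \ref{thm:hitting_time_prob_standardized}, whose two stated probabilities $\mathsf P(X_{\tau^x}^x=\ell)$ and $\mathsf P(X_{\tau^x}^x=u)$ sum to $1$. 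Hence $X_{\tau^x}^x$ is a simple random variable supported on $\{\ell,u\}$.

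Applying the definition of expectation for a discrete random variable (or conditioning on the value of $X_{\tau^x}^x$ via the law of total expectation) then gives
\[
\mathsf E X_{\tau^x}^x=\ell\,\mathsf P(X_{\tau^x}^x=\ell)+u\,\mathsf P(X_{\tau^x}^x=u),
\]
and substituting the closed forms for $\mathsf P(X_{\tau^x}^x=\ell)$ and $\mathsf P(X_{\tau^x}^x=u)$ from Theorem \ref{thm:hitting_time_prob_standardized} yields an explicit expression in terms of $\operatorname{erfi}$. There is no substantive obstacle in this argument; the only point deserving a line of care is the almost-sure finiteness of $\tau^x$, which guarantees that the exit position is well defined and that the two exit events exhaust the sample space.
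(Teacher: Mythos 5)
Your proof is correct and matches the paper's (implicit) reasoning: the paper states Corollary \ref{cor:exit_pos_expected_value} without proof precisely because, as you observe, $X_{\tau^x}^x$ is a two-valued random variable supported on $\{\ell,u\}$ and the identity is just the definition of expectation combined with Theorem \ref{thm:hitting_time_prob_standardized}. Your added care about path continuity and the almost-sure finiteness of $\tau^x$ is a reasonable justification of details the paper takes for granted.
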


\begin{corollary}[exit time distribution conditioned on exit position]
\label{cor:cond_exit_time_dists}
\[
\mathsf P(\tau^x\leq t|X_{\tau^x}^x=\ell)=\frac{g_2(x,t)}{\mathsf P(X_{\tau^x}^x=\ell)}
\]
and
\[
\mathsf P(\tau^x\leq t|X_{\tau^x}^x=u)=\frac{g_3(x,t)}{\mathsf P(X_{\tau^x}^x=u)},
\]
where $\mathsf P(X_{\tau^x}^x=\boldsymbol\cdot)$ is given in Theorem \ref{thm:hitting_time_prob_standardized} and $g_{\boldsymbol\cdot}$ is given in Lemma \ref{lem:exit_time_dists}.
\end{corollary}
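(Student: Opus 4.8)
The plan is to read off both identities directly from the definition of conditional probability, using the objects already assembled in Lemma~\ref{lem:exit_time_dists} and Theorem~\ref{thm:hitting_time_prob_standardized}. For the lower boundary I would write
\[
\mathsf P(\tau^x\leq t\mid X_{\tau^x}^x=\ell)=\frac{\mathsf P(\tau^x\leq t,\,X_{\tau^x}^x=\ell)}{\mathsf P(X_{\tau^x}^x=\ell)},
\]
which is legitimate provided the denominator is strictly positive. The numerator is exactly $g_2(x,t)$ by its defining formula in Lemma~\ref{lem:exit_time_dists}, so the right-hand side already has the claimed shape; the analogous computation with $X_{\tau^x}^x=u$ and $g_3(x,t)$ yields the second identity.

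Next I would justify well-posedness of the conditioning. Because $X_t^x$ is a non-degenerate diffusion started strictly inside $(\ell,u)$, it exits the interval in finite time almost surely and hits each endpoint with positive probability; concretely, Theorem~\ref{thm:hitting_time_prob_standardized} gives
\[
\mathsf P(X_{\tau^x}^x=\ell)=\frac{\operatorname{erfi}(\sqrt{\omega}\,u)-\operatorname{erfi}(\sqrt{\omega}\,x)}{\operatorname{erfi}(\sqrt{\omega}\,u)-\operatorname{erfi}(\sqrt{\omega}\,\ell)}>0
\]
since $\operatorname{erfi}$ is strictly increasing and $\ell<x<u$, and likewise $\mathsf P(X_{\tau^x}^x=u)>0$. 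This legitimizes the division in the displayed formulas. I would also note, as a consistency check, that the quotients are genuine conditional distribution functions in $t$: they are nondecreasing and right-continuous, and they tend to $1$ as $t\to\infty$ because $g_2(x,t)\uparrow\mathsf P(X_{\tau^x}^x=\ell)$ by monotone convergence (and similarly $g_3(x,t)\uparrow\mathsf P(X_{\tau^x}^x=u)$).

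The only point requiring any care --- and what I would flag as the main, admittedly modest, obstacle --- is the measure-theoretic status of conditioning on $\{X_{\tau^x}^x=\ell\}$. Since the exit position is a \emph{discrete} random variable supported on the two-point set $\{\ell,u\}$ (the process leaves $(\ell,u)$ only through an endpoint), this is conditioning on an event of positive probability, so the elementary definition applies verbatim and no regular conditional distribution or disintegration machinery is needed. I would make this observation explicit and then combine it with the identification of the numerator as $g_2$ (respectively $g_3$) and the positivity from Theorem~\ref{thm:hitting_time_prob_standardized} to conclude.
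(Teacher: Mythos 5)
Your argument is correct and is exactly the route the paper intends: the corollary follows immediately from the elementary definition of conditional probability, with the numerator identified as $g_2$ (resp.\ $g_3$) from Lemma~\ref{lem:exit_time_dists} and the denominator strictly positive by Theorem~\ref{thm:hitting_time_prob_standardized} (the paper states the corollary without a written proof for precisely this reason). Your added remarks on well-posedness and the limit $g_2(x,t)\uparrow\mathsf P(X_{\tau^x}^x=\ell)$ are sound but not needed beyond what the paper leaves implicit.
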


\begin{theorem}[expected exit time]
\label{thm:waiting_time_expectation_conditioned_on_in_threshold}
\begin{multline*}
\mathsf E\tau^x=\mathsf P(X_{\tau^x}^x=\ell)(\ell^2 F(\omega\ell^2)-x^2 F(\omega x^2))\\
+\mathsf P(X_{\tau^x}^x=u)(u^2 F(\omega u^2)-x^2 F(\omega x^2))
\end{multline*}
where $F(x)={_2F_2}({1,1\atop 3/2,2};x)$ is the generalized hypergeometric function and $\mathsf P(X_{\tau^x}^x=s)$ is given in Theorem \ref{thm:hitting_time_prob_standardized}.
\end{theorem}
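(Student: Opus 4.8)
\emph{Proof proposal.} I would adapt the martingale argument used for Theorem \ref{thm:hitting_time_prob_standardized}, replacing the $\mathscr L$-harmonic function by a solution of the \emph{inhomogeneous} equation $\mathscr L\psi=-1$. Suppose such a $\psi$ exists and is $C^2$ near $[\ell,u]$. By It\^o's formula, $\mathrm d\psi(X_t^x)=\mathscr L\psi(X_t^x)\,\mathrm dt+\psi'(X_t^x)\,\mathrm dW_t=-\mathrm dt+\psi'(X_t^x)\,\mathrm dW_t$, so $M_t^x=\psi(X_t^x)+t$ is a local martingale. Since $X_t^x$ remains in the bounded interval $(\ell,u)$ up to $\tau^x$ and the OU exit time from a bounded interval has finite expectation, Doob's optional stopping theorem yields $\mathsf E[\psi(X_{\tau^x}^x)]+\mathsf E\tau^x=\psi(x)$. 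As $X_{\tau^x}^x\in\{\ell,u\}$, the law of total expectation gives
\[
\mathsf E\tau^x=\psi(x)-\psi(\ell)\,\mathsf P(X_{\tau^x}^x=\ell)-\psi(u)\,\mathsf P(X_{\tau^x}^x=u),
\]
with the hitting probabilities supplied by Theorem \ref{thm:hitting_time_prob_standardized}. Using $\mathsf P(X_{\tau^x}^x=\ell)+\mathsf P(X_{\tau^x}^x=u)=1$ to rewrite $\psi(x)$ and collecting terms then reproduces the stated formula, provided $\psi(x)=-x^2F(\omega x^2)$.

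The core of the proof is therefore to solve $\tfrac12\psi''-\omega x\psi'=-1$ in closed form. I would substitute $\psi(x)=\sum_{k\ge0}a_kx^k$; matching powers of $x$ forces $a_2=-1$ together with the recursion $a_{k+2}=\frac{2\omega k}{(k+1)(k+2)}\,a_k$ for $k\ge1$. Choosing $a_0=a_1=0$ suppresses the two homogeneous branches (the constant and the odd $\operatorname{erfi}$-type solution), so only even powers survive; unwinding the recursion identifies the series as $\psi(x)=-x^2\sum_{n\ge0}\frac{(1)_n(1)_n}{(3/2)_n(2)_n}\frac{(\omega x^2)^n}{n!}=-x^2\,{_2F_2}({1,1\atop3/2,2};\omega x^2)=-x^2F(\omega x^2)$, which is exactly the function needed above. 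One should also note that this particular solution is unique only up to an additive combination of the homogeneous solutions $1$ and $\operatorname{erfi}(\sqrt\omega x)$, but the final answer is unaffected: the constant cancels because the hitting probabilities sum to one, and the $\operatorname{erfi}$ term cancels because $\operatorname{erfi}(\sqrt\omega X_t^x)$ is a martingale---equivalently, one verifies from Theorem \ref{thm:hitting_time_prob_standardized} that $\operatorname{erfi}(\sqrt\omega\ell)\mathsf P(X_{\tau^x}^x=\ell)+\operatorname{erfi}(\sqrt\omega u)\mathsf P(X_{\tau^x}^x=u)=\operatorname{erfi}(\sqrt\omega x)$.

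The main obstacle is this closed-form solution step---in particular recognizing the coefficient pattern coming out of the recursion as the generalized hypergeometric function ${_2F_2}({1,1\atop3/2,2};\cdot)$. If the series bookkeeping proves inconvenient, two alternatives are available. First, one can solve the ODE by reduction of order from the homogeneous basis $\{1,\operatorname{erfi}(\sqrt\omega x)\}$ and variation of parameters, obtaining $\psi$ as an iterated integral of Gaussian-type functions, then simplifying to the hypergeometric form. Second, if one is willing to start from the stated answer, it suffices to \emph{verify} that $\psi(x)=-x^2F(\omega x^2)$ satisfies $\mathscr L\psi=-1$ by termwise differentiation of the defining series (or via the generalized hypergeometric differential equation). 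Finally, I would record the routine analytic points glossed over above: that $\mathsf E\tau^x<\infty$, so optional stopping is legitimate (standard for diffusions exiting a bounded interval), and that $\psi$, given by an everywhere-convergent power series, is smooth on a neighborhood of $[\ell,u]$.
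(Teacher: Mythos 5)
Your proposal is correct, and it arrives at the stated formula by a route that differs from the paper's in how the boundary data enter. The paper starts from the Kolmogorov backward equation for $g(x,t)=\mathsf P(\tau^x>t)$, integrates in $t$ to obtain a recurrence for the moments $h_n(x)=\mathsf E(\tau^x)^n$, and for $n=0$ solves the boundary value problem $\tfrac12 h_1''-\omega x h_1'=-1$, $h_1(\ell)=h_1(u)=0$, via the integrating factor $(e^{-\omega x^2}h_1')'=-2e^{-\omega x^2}$; the hitting probabilities appear only at the end, when the $2\times2$ linear system for the constants $c_1,c_2$ is solved. You instead apply Dynkin's formula/optional stopping to $M_t=\psi(X_t^x)+t$ for a particular solution $\psi$ of $\mathscr L\psi=-1$, which injects $\mathsf P(X_{\tau^x}^x=\ell)$ and $\mathsf P(X_{\tau^x}^x=u)$ directly through $\mathsf E[\psi(X_{\tau^x}^x)]$ and sidesteps the linear system entirely; your observation that the $\operatorname{erfi}$ and constant ambiguities in $\psi$ cancel (the former by the very martingale identity underlying Theorem \ref{thm:hitting_time_prob_standardized}) is exactly right and makes the argument clean. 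Both proofs share the one genuinely nontrivial step---identifying the particular solution as $-x^2\,{_2F_2}({1,1\atop 3/2,2};\omega x^2)$---which you obtain by a power-series recursion and the paper obtains by double integration of Gaussian factors; your version is arguably more self-contained on this point since the coefficient pattern is matched explicitly. What the paper's approach buys in exchange is the moment recurrence $\tfrac12 h_{n+1}''-\omega x h_{n+1}'=-(n+1)h_n$, which gives access to all higher moments of $\tau^x$, not just the mean; your approach buys a shorter path to the first moment and a structural parallel with the proof of the exit-position distribution. The analytic caveats you flag (finiteness of $\mathsf E\tau^x$, smoothness of $\psi$) are the standard ones and are handled as you indicate.
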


\begin{proof}
Let $g(x,t)\coloneqq\mathsf P(\tau^x> t)$ and $h_n(x)=\mathsf E(\tau^x)^n$. Then by Lemma \ref{lem:KBE}, $g$ satisfies
\[
\frac{1}{2}\partial_x^2g(x,t)-\omega x\partial_xg(x,t)=\partial_tg(x,t).
\]
Since $\tau^x$ has nonnegative support, it follows that the moments satisfy $h_{n+1}(x)=(n+1)\int_0^\infty t^ng(x,t)\,\mathrm dt$; thus, integrating both sides of the equation for $g$ gives the recurrence relation
\[
\frac{1}{2}h_{n+1}^{\prime\prime}(x)-\omega xh_{n+1}^\prime(x)=-(n+1)h_n(x),
\]
with $h_0(x)=1$. Choosing $n=0$ and noting that $h_1(\ell)=h_1(u)=0$ yields the boundary value problem
\[
\begin{aligned}
&\frac{1}{2}h_1^{\prime\prime}(x)-\omega xh_1^\prime(x)=-1\\
&h_1(\ell)=h_1(u)=0.
\end{aligned}
\]
The second-order equation for $h_1$ can be written in the form: $(e^{-\omega x^2}h_1^\prime(x))^\prime=-2e^{-\omega x^2}$. Integrating both sides, isolating $h_1^\prime$, and then integrating again yields the general solution
\[
h_1(x)=c_1\operatorname{erfi}(\sqrt\omega x)-x^2F(\omega x^2)+c_2,
\]
where
\[
F(t)={_2F_2}\left({1,1 \atop 3/2,2};t\right).
\]
The boundary conditions then allow us to write a system of two linear equations for the two unknown constants $c_1$ and $c_2$. Solving this system of equations yields the final result. This completes the proof.
\end{proof}


\subsection{Simulating exit times and positions}

To demonstrate the theoretical results derived and provide a basis for Algorithm \ref{alg:DC_event_stream}, the Diffusion Exit (\texttt{DiffExit}) algorithm proposed by Herrmann \& Zucca was implemented in the Matlab programming environment \cite{Herrmann_2020,Herrmann_2022}. This algorithm is based on accept-reject sampling techniques to generate pseudo-random observations from the joint distribution of exit times and positions for generic diffusion processes. The benefit of this algorithm is its ability to simulate exit times and positions from the exact distribution without the need for explicitly evaluating these distributions or computing the diffusion path.

The \texttt{DiffExit} algorithm was executed to simulate a total of $10^6$ observations from the joint distribution of $(\tau^x,X_{\tau^x}^x)$ (Corollary \ref{cor:exit_time_position_joint_dist}) using the parameters $(\omega,\ell,u,x)=(2,-1/2,1,0)$. Figure \ref{fig:DiffExit_demo} plots the results of the simulation against the expected theoretical densities. To calculate the density functions in Figure \ref{fig:DiffExit_demo}, the PDEs describing the respective distribution functions in Lemma \ref{lem:exit_time_dists} and Corollary \ref{cor:cond_exit_time_dists} were first numerically evaluated. The numerical solutions were fit with an interpolating polynomial, and the fit was differentiated to obtain the density. In addition to these results, the mean exit time of the simulated data was $\bar{\tau}^x=0.6915$ compared to the expected value given in Theorem \ref{thm:waiting_time_expectation_conditioned_on_in_threshold} of $\mathsf E\tau^x=0.6918$.
\begin{figure}[htb]
    \centering
    \includegraphics[scale=0.95]{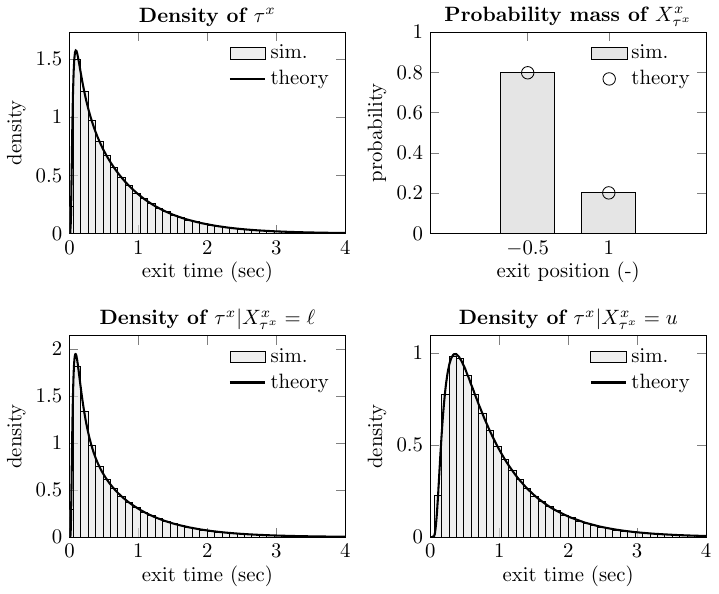}
    \caption{Simulation of exit times and exit positions for the process $X_t^x$ using the \texttt{DiffExit} algorithm.}
    \label{fig:DiffExit_demo}
\end{figure}


\section{Event Stream Statistics}
\label{sec:event_stream_statistics}

The purpose of this section is to derive analytical expressions pertaining to the canonical event stream model in Section \ref{eq:canonical_model_statement} from the diffusion exit statistics derived in Appendix \ref{sec:OU_process_exit_stats_theoretical}. As this will be regularly used, we let
\[
\phi(x)\coloneqq\frac{1}{\sqrt{2\pi}}e^{-x^2/2}
\]
denote the density of the standard normal variable $X\sim\mathcal N(0,1)$.

From an analytical perspective, the key pieces of information needed for understanding the event stream dynamics and its asymptotic (long time horizon) statistics is the dynamics of the normalized reference voltage process $\{Z_n\}$ and the limiting distribution of $Z_n$ as $n\to\infty$. As such, our first investigation lies in understanding the dynamics of $\{Z_n\}$.

From the Markov property of the process $X_t$ and the definition $Z_n=X_{T_n+\rho}$ we can show that
\begin{equation}
\label{eq:Zn_cond_on_XTn}
Z_n|X_{T_n}=x\sim\mathcal N(\alpha x,\sigma_\alpha^2),
\end{equation}
where $\alpha=e^{-\omega\rho}$ and $\sigma_\alpha^2=(1-\alpha^2)/2$. We can now use this results to state the transition density of the process $\{Z_n\}$.

\begin{lemma}[one-step transition density for $Z_n$]
\label{lem:trans_prob_Zn}
Let $p^{(1)}(z,z^\prime)$ denote the density of $Z_n|Z_{n-1}=z$. Then,
\[
p^{(1)}(z,z^\prime)=\mathsf P(X_{\tau^z}^z=\ell)\frac{1}{\sigma_\alpha}\phi\left(\frac{z^\prime-\alpha\ell}{\sigma_\alpha}\right)
+\mathsf P(X_{\tau^z}^z=u)\frac{1}{\sigma_\alpha}\phi\left(\frac{z^\prime-\alpha u}{\sigma_\alpha}\right),
\]
where $(\ell,u)=(z-\tilde\theta^-,z+\tilde\theta^+)$ and $\mathsf P(X_{\tau^x}^x=s)$ given in Theorem \ref{thm:hitting_time_prob_standardized}.
\end{lemma}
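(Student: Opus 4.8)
The plan is to obtain $p^{(1)}$ by conditioning on the exit position $X_{T_n}$ and applying the law of total probability, using only two ingredients already in hand: the conditional law of $Z_n$ given $X_{T_n}$ stated in (\ref{eq:Zn_cond_on_XTn}), and the two-point exit-position distribution of Theorem~\ref{thm:hitting_time_prob_standardized}.

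First I would record that, starting from $Z_{n-1}=z$, the process $X_t$ is monitored until it leaves the bounded interval $(\ell,u)=(z-\tilde\theta^-,z+\tilde\theta^+)$; since the OU diffusion has continuous sample paths, it can exit only through an endpoint, so $X_{T_n}\in\{\ell,u\}$ almost surely. Next, I would argue that the conditional distribution of $Z_n$ given $X_{T_n}=s$ does not depend further on $Z_{n-1}$: by the Markov property of $X_t$ at the stopping time $\tau_n$ (equivalently, by the representation $Z_n=\alpha X_{T_n}+\sigma_\alpha\xi_n$ in (\ref{eq:ISI_Zn_transformation}) with $\xi_n\sim\mathcal N(0,1)$ independent of $(\tau_n,X_{T_n})$ and of the past), we have $Z_n\mid X_{T_n}=s\sim\mathcal N(\alpha s,\sigma_\alpha^2)$ as in (\ref{eq:Zn_cond_on_XTn}), whose density at $z'$ is $\tfrac{1}{\sigma_\alpha}\phi\!\left(\tfrac{z'-\alpha s}{\sigma_\alpha}\right)$. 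Then the density version of the law of total probability over the two exit positions gives
\[
p^{(1)}(z,z')=\sum_{s\in\{\ell,u\}}\frac{1}{\sigma_\alpha}\phi\!\left(\frac{z'-\alpha s}{\sigma_\alpha}\right)\mathsf P\!\left(X_{T_n}=s\mid Z_{n-1}=z\right).
\]
Finally, identifying $\mathsf P(X_{T_n}=s\mid Z_{n-1}=z)$ with $\mathsf P(X_{\tau^z}^z=s)$ via the substitution $x\mapsto z$, $(\ell,u)\mapsto(z-\tilde\theta^-,z+\tilde\theta^+)$ indicated before Theorem~\ref{thm:hitting_time_prob_standardized}, and inserting the $\operatorname{erfi}$ expressions from that theorem, yields the claimed formula. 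As a by-product, the resulting expression depends on $z,z'$ only (not on $n$), confirming the time-homogeneity asserted in the surrounding text.

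I do not anticipate a serious obstacle; the one step needing genuine care is the conditional-independence claim — that conditioning on $X_{T_n}$ screens off the dependence on $Z_{n-1}$ — which is precisely where the strong Markov property of the OU diffusion at $\tau_n$, together with the independence of the post-refractory Gaussian increment, is used. The remaining steps (two-point support of $X_{T_n}$, the Gaussian density, and the substitution into Theorem~\ref{thm:hitting_time_prob_standardized}) are routine bookkeeping.
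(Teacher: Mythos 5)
Your proposal is correct and follows essentially the same route as the paper: both condition on the two-point exit position $X_{T_n}$, mix the Gaussian conditional density from (\ref{eq:Zn_cond_on_XTn}) against the exit-position probabilities of Theorem~\ref{thm:hitting_time_prob_standardized}, and read off the result (the paper merely phrases the mixture as an integral against a Dirac-delta representation of the discrete law rather than as a two-term sum). Your added remarks on the two-point support via path continuity and on the strong Markov property screening off $Z_{n-1}$ are sound elaborations of steps the paper leaves implicit.
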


\begin{proof}
If $Z_{n-1}=z$ then the threshold interval is $(\ell,u)=(z-\tilde\theta^-,z+\tilde\theta^+)$ and density of $Z_n|X_{T_n}=x$ is given in (\ref{eq:Zn_cond_on_XTn}). Additionally, the conditional variable $X_{T_n}|Z_{n-1}=z$ is discrete but can be formally represented through the density
\[
g(z,x)=\mathsf P(X_{\tau^z}^z=z-\tilde\theta^-)\delta(x-(z-\tilde\theta^-))+\mathsf P(X_{\tau^z}^z=z+\tilde\theta^+)\delta(x-(z+\tilde\theta^+)).
\]
Letting $f(x,z^\prime)$ denote the density of $Z_n|X_{T_n}=x$, it then follows:
\[
p^{(1)}(z,z^\prime)=\int_{-\infty}^\infty g(z,x)f(x,z^\prime)\,\mathrm dx.
\]
Performing the integration yields the desired result. The proof is now complete.
\end{proof}

With the dynamic of $\{Z_n\}$ at hand, the second critical piece of information needed is the limit density of $Z_n$ as $n\to\infty$. Unfortunately this limit density is intractable. In fact, we do not even know if $Z_n$ possesses a limit density, if this density is unique, or if it is stable (invariant w.r.t.~the choice of initial conditions). Despite these limitations, our strategy will be to develop as much understanding of the limit density as possible by working under the following assumption.

\begin{assumption}[existence, uniqueness, and stability]
\label{asum:stability}
As $n\to\infty$, the density of the random variable $Z_n$ approaches a unique limiting density invariant w.r.t.~the initial conditions $Z_0,X_0$ (see normalization in Section \ref{subsec:event_generation}).
\end{assumption}

From here on out, we will proceed under the framework that Assumption \ref{asum:stability} holds to determine the properties of hypothesized limit density. In what follows, we will derive some properties of the hypothesized limit density. For each of these results, Assumption \ref{asum:stability} is implicitly assumed.

\begin{corollary}[limit density parameters]
\label{cor:lim_density_params}
The limit density of $Z_n$ is determined by the parameters $\boldsymbol\varphi=(\omega,\rho,\tilde\theta^-,\tilde\theta^+)$. 
\end{corollary}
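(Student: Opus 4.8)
The plan is to trace the dependence of every object entering the construction of $\{Z_n\}$ back to the parameter vector $\boldsymbol\varphi=(\omega,\rho,\tilde\theta^-,\tilde\theta^+)$ and then invoke Assumption \ref{asum:stability}. First I would observe that, by Lemma \ref{lem:trans_prob_Zn} together with Theorem \ref{thm:hitting_time_prob_standardized}, the one-step transition density $p^{(1)}(z,z')$ is an explicit function of $(z,z')$ whose only free parameters are $\alpha=e^{-\omega\rho}$, $\sigma_\alpha^2=(1-\alpha^2)/2$, and the interval endpoints $(\ell,u)=(z-\tilde\theta^-,z+\tilde\theta^+)$ --- all of which are themselves functions of $\boldsymbol\varphi$ alone. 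In particular $p^{(1)}$ depends neither on $n$ nor on the initial data $(X_0,Z_0)$ (the latter point being exactly why the normalization in Section \ref{subsec:event_generation} was performed).

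Next I would exploit time-homogeneity: the law of $Z_n$ given $Z_0=z_0$ is the $n$-fold Chapman--Kolmogorov iterate of $p^{(1)}$, hence a density $f_n(\cdot\mid z_0,\boldsymbol\varphi)$ determined entirely by $z_0$ and $\boldsymbol\varphi$. By Assumption \ref{asum:stability}, $f_n(\cdot\mid z_0,\boldsymbol\varphi)\to f(\cdot)$ for a density $f$ that does not depend on the choice of $z_0$ (nor on $X_0$). Since each $f_n$ is a function of $(z_0,\boldsymbol\varphi)$ and the limit is independent of $z_0$, the limit is a function of $\boldsymbol\varphi$ only; relabelling it $f(\cdot\mid\boldsymbol\varphi)$ gives the claim. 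Equivalently, passing $n\to\infty$ in the forward recursion $f_n(z'\mid z_0,\boldsymbol\varphi)=\int_{-\infty}^\infty f_{n-1}(z\mid z_0,\boldsymbol\varphi)\,p^{(1)}(z,z')\,\mathrm dz$ exhibits $f(\cdot\mid\boldsymbol\varphi)$ as the unique (by Assumption \ref{asum:stability}) solution of the invariance equation $f(z')=\int_{-\infty}^\infty f(z)\,p^{(1)}(z,z')\,\mathrm dz$, whose kernel depends only on $\boldsymbol\varphi$.

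Because Assumption \ref{asum:stability} already supplies existence, uniqueness, and stability, the corollary is essentially a bookkeeping consequence and I do not expect a genuine obstacle; the only mildly non-cosmetic point is the interchange of limit and integral used to pass to the invariance equation. I would dispatch this either by dominated convergence --- noting that $p^{(1)}(z,z')\le \tfrac{1}{\sigma_\alpha}\phi(0)$ is bounded and the Gaussian mixture is uniformly integrable in $z$ --- or, more cheaply, by simply appealing to the uniqueness clause of Assumption \ref{asum:stability} to identify the limit directly with the parameter-determined stationary density, bypassing the recursion entirely. Either route yields $f=f(\,\cdot\mid\boldsymbol\varphi)$.
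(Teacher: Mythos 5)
Your proposal is correct and follows essentially the same route as the paper, which disposes of this corollary in a single sentence: the transition kernel involves only $\boldsymbol\varphi$, and Assumption \ref{asum:stability} removes the dependence on the initial conditions $(X_0,Z_0)$, so the limit density can depend on nothing but $\boldsymbol\varphi$. Your additional care about the limit--integral interchange in the invariance equation is more than the paper supplies, but as you note it is dispensable once uniqueness is assumed.
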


Corollary \ref{cor:lim_density_params} holds because the hypothesized limit density is assumed to be functionally independent of the initial conditions $Z_0$ and $X_0$. Consequently, all asymptotic event stream statistics, e.g., event probabilities, event rates, and event inter-spike intervals, are also fully determined by these four parameters. With this observation at hand, we now study a symmetry of the hypothesized limit density.

\begin{lemma}[symmetry of the limit density]
\label{lem:Zn_limit_density_symmetry}
Under Assumption \ref{asum:stability}, let $f(z|\boldsymbol\varphi)$ denote the limit density of $Z_n$ for the parameters $\boldsymbol\varphi=(\omega,\rho,\tilde\theta^-,\tilde\theta^+)$. Then, $f(z|\boldsymbol\varphi)=f(-z|\boldsymbol\varphi^\dagger)$, where $\boldsymbol\varphi^\dagger=(\omega,\rho,\tilde\theta^+,\tilde\theta^-)$.
\end{lemma}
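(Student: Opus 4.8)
The plan is to exhibit a reflection symmetry of the chain: negating the normalized voltage, $X_t\mapsto -X_t$, carries the model with parameters $\boldsymbol\varphi=(\omega,\rho,\tilde\theta^-,\tilde\theta^+)$ onto the model with parameters $\boldsymbol\varphi^\dagger=(\omega,\rho,\tilde\theta^+,\tilde\theta^-)$, and then to pass to the limit using Assumption \ref{asum:stability}.

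First I would record the pathwise reflection invariance of the driving diffusion. If $X_t^x$ solves $\mathrm dX_t=-\omega X_t\,\mathrm dt+\mathrm dW_t$ from $x$, then $-X_t^x$ solves the same equation driven by $-W_t$, which is again a standard Wiener process; hence the processes $\{-X_t^x\}_{t\ge 0}$ and $\{X_t^{-x}\}_{t\ge 0}$ have the same law. Consequently the exit problem reflects: exiting $(\ell,u)$ from $x$ at the bottom endpoint $\ell$ (resp.\ the top endpoint $u$) at time $\tau$ corresponds, under $X\mapsto -X$, to exiting $(-u,-\ell)$ from $-x$ at the top endpoint $-\ell$ (resp.\ the bottom endpoint $-u$) at the same time $\tau$. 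In distributional terms, $(\tau^x,X_{\tau^x}^x)$ for the interval $(\ell,u)$ equals $(\tau^{-x},-X_{\tau^{-x}}^{-x})$ for the interval $(-u,-\ell)$; this is consistent with the oddness of $\operatorname{erfi}$ in Theorem \ref{thm:hitting_time_prob_standardized}.

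Next I would translate this to the jump chain. When $Z_{n-1}=z$, the threshold interval for $\boldsymbol\varphi$ is $(z-\tilde\theta^-,z+\tilde\theta^+)$, whose reflection is $(-z-\tilde\theta^+,-z+\tilde\theta^-)$ --- precisely the threshold interval for $\boldsymbol\varphi^\dagger$ centered at the reflected reference $-z$, with the two thresholds interchanged. The reference-update step $Z_n\mid X_{T_n}=x\sim\mathcal N(\alpha x,\sigma_\alpha^2)$ from (\ref{eq:Zn_cond_on_XTn}) depends only on $\omega$ and $\rho$ through $\alpha=e^{-\omega\rho}$ and $\sigma_\alpha^2=(1-\alpha^2)/2$, and it is sign-equivariant: $-Z_n\mid(-X_{T_n})=\tilde x\sim\mathcal N(\alpha\tilde x,\sigma_\alpha^2)$. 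Composing the exit step with the update step, the one-step transition density of Lemma \ref{lem:trans_prob_Zn} satisfies
\[
p^{(1)}(z,z'\mid\boldsymbol\varphi)=p^{(1)}(-z,-z'\mid\boldsymbol\varphi^\dagger),
\]
which can alternatively be checked termwise from the explicit formula, using that $\operatorname{erfi}$ is odd (to swap the two exit probabilities) and $\phi$ is even (to swap the two Gaussian kernels). By the Chapman--Kolmogorov recursion this identity propagates to every $m$-step transition density, $p^{(m)}(z,z'\mid\boldsymbol\varphi)=p^{(m)}(-z,-z'\mid\boldsymbol\varphi^\dagger)$.

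Finally, under Assumption \ref{asum:stability} the density of $Z_m$ converges to the limit density regardless of initialization, so I would start both chains from the sign-symmetric initialization $Z_0=X_0=0$ and let $m\to\infty$: the left side gives $p^{(m)}(0,z'\mid\boldsymbol\varphi)\to f(z'\mid\boldsymbol\varphi)$ and the right side gives $p^{(m)}(0,-z'\mid\boldsymbol\varphi^\dagger)\to f(-z'\mid\boldsymbol\varphi^\dagger)$, so equality of the prelimit sequences yields $f(z'\mid\boldsymbol\varphi)=f(-z'\mid\boldsymbol\varphi^\dagger)$, as claimed. The main obstacle here is bookkeeping rather than analysis --- keeping straight which endpoint and which threshold maps to which under the reflection, and handling the first-event convention and the initial pair $(Z_0,X_0)$; this last point is exactly what Assumption \ref{asum:stability} neutralizes, since it licenses the symmetric choice $Z_0=X_0=0$.
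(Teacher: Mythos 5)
Your proof is correct and rests on the same reflection symmetry the paper uses: negating the state space interchanges the two thresholds, and Assumption \ref{asum:stability} then forces the limit densities to coincide up to the sign flip. The paper executes this by negating the pathwise difference equation for $Z_n$ (with $B_n\mapsto 1-B_n$, $W_n\mapsto -W_n$), whereas you verify the equivalent identity $p^{(1)}(z,z'\mid\boldsymbol\varphi)=p^{(1)}(-z,-z'\mid\boldsymbol\varphi^\dagger)$ at the transition-density level and propagate it through Chapman--Kolmogorov; the two are interchangeable, and your version has the minor virtue of making explicit, via the oddness of $\operatorname{erfi}$ in Theorem \ref{thm:hitting_time_prob_standardized}, the exit-probability swap that the paper's ``equivalent in distribution'' step leaves implicit.
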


\begin{proof}
From the Markov property of the process $\{X_t\}$ and the definition of $Z_n=X_{T_n+\rho}$ we may write
\[
Z_n=\alpha X_{T_n}+\sigma_\alpha\, \xi_n,
\]
where 
\[
X_{T_n}=(Z_{n-1}-\tilde\theta^-)B_n+(Z_{n-1}+\tilde\theta^+)(1-B_n),
\]
\[
B_n\sim\operatorname{Bernoulli}(\mathsf P(X_{T_n}=z-\tilde\theta^-|Z_{n-1}=z)),
\]
and $\{\xi_n\}\overset{\mathrm{iid}}{\sim}\mathcal N(0,1)$. Thus, the process $\{Z_n\}$ can be expressed in the form of a first-order stochastic difference equation
\[
Z_n=\alpha((Z_{n-1}-\tilde\theta^-)B_n+(Z_{n-1}+\tilde\theta^+)(1-B_n))+\sigma_\alpha W_n,
\]
where $Z_0=z$. Negating both sides of this difference equation and letting $Z_n^\dagger=-Z_n$, $B_n^\dagger=1-B_n$, and $W_n^\dagger=-W_n$ gives after algebraic manipulation
\[
Z_n^\dagger=\alpha((Z_{n-1}^\dagger-\tilde\theta^+)B_n^\dagger+(Z_{n-1}^\dagger+\tilde\theta^-)(1-B_n^\dagger))+\sigma_\alpha W_n^\dagger,
\]
where $Z_0^\dagger=-z$. Now notice that in terms of distribution the chain describing $Z_n^\dagger$ is equivalent to the chain describing $Z_n$ with the exception of different initial conditions and the threshold parameters interchanged. Under Assumption \ref{asum:stability} the limit density of both chains is functionally independent of these starting points; thus $Z_n$ and $-Z_n^\dagger$ possess the same limit density. This completes the proof.
\end{proof}

\begin{corollary}[symmetry: special case for symmetric thresholds]
\label{cor:symmetry_equal_thresholds}
If $\tilde\theta^-=\tilde\theta^+$, then the limit density of $Z_n$, $f(z|\boldsymbol\varphi)$, is an even function in $z$ and thus $\lim_{n\to\infty}\mathsf EZ_n=0$.
\end{corollary}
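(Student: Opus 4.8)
The plan is to derive Corollary~\ref{cor:symmetry_equal_thresholds} as an immediate consequence of Lemma~\ref{lem:Zn_limit_density_symmetry} by specializing to the symmetric case. First I would observe that when $\tilde\theta^-=\tilde\theta^+$, the two parameter vectors appearing in the lemma coincide: $\boldsymbol\varphi=(\omega,\rho,\tilde\theta^-,\tilde\theta^+)=(\omega,\rho,\tilde\theta^+,\tilde\theta^-)=\boldsymbol\varphi^\dagger$. Substituting this into the identity $f(z|\boldsymbol\varphi)=f(-z|\boldsymbol\varphi^\dagger)$ from Lemma~\ref{lem:Zn_limit_density_symmetry} yields $f(z|\boldsymbol\varphi)=f(-z|\boldsymbol\varphi)$, which is precisely the statement that the limit density is an even function of $z$.

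The second half of the claim, that $\lim_{n\to\infty}\mathsf EZ_n=0$, then follows from the evenness of the limit density together with the existence of the limit density guaranteed by Assumption~\ref{asum:stability}. Writing $f(z)$ for the limit density, the limiting mean is $\lim_{n\to\infty}\mathsf EZ_n=\int_{-\infty}^\infty z f(z)\,\mathrm dz$, and since $z\mapsto z f(z)$ is an odd function (product of the odd function $z$ and the even function $f(z)$), the integral vanishes by symmetry, provided it converges absolutely. I would note that integrability is not automatic from evenness alone, so strictly speaking one should either invoke that $Z_n$ has finite first moments uniformly (which can be read off from the transition structure in Lemma~\ref{lem:trans_prob_Zn}, since each step is a bounded shift of $z$ plus a Gaussian of fixed variance, keeping tails Gaussian-like) or simply interpret the statement under the tacit regularity assumptions already in force for the hypothesized limit density.

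The main obstacle here is essentially a non-obstacle: the mathematical content is entirely carried by Lemma~\ref{lem:Zn_limit_density_symmetry}, and the corollary is a one-line specialization plus a standard odd-integrand argument. The only point requiring a moment's care is the interchange of limit and integral implicit in writing $\lim_{n\to\infty}\mathsf EZ_n=\int z f(z)\,\mathrm dz$, i.e. justifying that convergence in density entails convergence of the mean; this is where one would appeal to uniform integrability of $\{Z_n\}$, which is plausible given the Gaussian-tailed one-step transitions but is not spelled out in the excerpt. Since the paper works throughout under Assumption~\ref{asum:stability} and treats the limit density somewhat informally, I would present the argument at the same level of rigor: state the evenness from the lemma, then conclude the mean is zero by the odd-symmetry of $zf(z)$.
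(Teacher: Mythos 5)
Your proposal is correct and matches the paper's (implicit) argument exactly: the corollary is stated without a written proof precisely because it follows by setting $\boldsymbol\varphi=\boldsymbol\varphi^\dagger$ in Lemma~\ref{lem:Zn_limit_density_symmetry} and then integrating the odd function $zf(z)$. Your additional remark on uniform integrability is a reasonable refinement but does not change the route.
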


Lemma \ref{lem:Zn_limit_density_symmetry} is a key insight into the limiting dynamics of the event stream because the symmetries in the limit density of $Z_n$ carry over into the event probabilities, event rates, and event inter-spike intervals.


\subsection{Conditional event stream statistics}
\label{subsec:conditional_event_stream_statistics}

The last three results give us further insight into the inter-event dynamics.

\begin{corollary}[event probabilities given $Z_{n-1}=z$]
\label{cor:event_prob_cond_on_Z}
\[
\mathsf P(E_n=\mathrm{off}|Z_{n-1}=z)=\mathsf P(X_{\tau^z}^z=\ell)
\]
and
\[
\mathsf P(E_n=\mathrm{on}|Z_{n-1}=z)=\mathsf P(X_{\tau^z}^z=u)
\]
where $(\ell,u)=(z-\tilde\theta^-,z+\tilde\theta^+)$ and $\mathsf P(X_{\tau^x}^x=s)$ given in Theorem \ref{thm:hitting_time_prob_standardized}.
\end{corollary}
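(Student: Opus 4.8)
The plan is to reduce the conditional event-polarity probabilities to the marginal exit-position probabilities of the normalized OU process, for which Theorem \ref{thm:hitting_time_prob_standardized} already supplies closed forms.

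First I would observe that, by Definition \ref{def:event_polarity}, conditioning on $Z_{n-1}=z$ makes the $n$th threshold interval $(\ell,u)=(z-\tilde\theta^-,z+\tilde\theta^+)$, and that the event $\{E_n=\mathrm{off}\}$ coincides exactly with $\{X_{T_n}=\ell\}$ while $\{E_n=\mathrm{on}\}$ coincides exactly with $\{X_{T_n}=u\}$. Because the OU process has (almost surely) continuous sample paths, $X_{T_n}\in\{\ell,u\}$ with probability one, so these two events partition the sample space given $Z_{n-1}=z$; hence it suffices to compute $\mathsf P(X_{T_n}=\ell\mid Z_{n-1}=z)$, the other probability being its complement.

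Next I would invoke the strong Markov property and time-homogeneity of $X_t$ at the time $T_{n-1}+\rho$ (for $n\geq 2$; for $n=1$ one argues directly from $t=0$, or simply uses the convention from Section \ref{subsec:event_generation} of discarding the first event). Since $Z_{n-1}=X_{T_{n-1}+\rho}$ by Definition \ref{def:event_times}, on the set $\{Z_{n-1}=z\}$ the shifted process $\{X_{T_{n-1}+\rho+t}\}_{t\geq 0}$ is again a normalized OU process started at $z$ and independent of the past, $\tau_n$ is its first exit time from $(\ell,u)$, and $X_{T_n}$ is its exit position. This is precisely the substitution $x\mapsto z$, $(\ell,u)\mapsto(z-\tilde\theta^-,z+\tilde\theta^+)$, $\tau^x\mapsto\tau_n$, $X_{\tau^x}^x\mapsto X_{T_n}$ announced in Section \ref{sec:key_theoretical_contributions}. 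Therefore $\mathsf P(X_{T_n}=\ell\mid Z_{n-1}=z)=\mathsf P(X_{\tau^z}^z=\ell)$ and likewise $\mathsf P(X_{T_n}=u\mid Z_{n-1}=z)=\mathsf P(X_{\tau^z}^z=u)$, and plugging in the expressions of Theorem \ref{thm:hitting_time_prob_standardized} gives the stated formulas.

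The main obstacle is not a computational one but the Markov reduction step: one must verify that the polarity event at step $n$ depends on the history only through $Z_{n-1}$, i.e. that conditioning on $Z_{n-1}=z$ genuinely restarts the exit problem at $z$ with a fresh copy of the driving Wiener noise. This rests on $T_{n-1}+\rho$ being a stopping time for the natural filtration of $X_t$ together with the refractory offset $\rho$ being deterministic, so that no extra randomness is introduced between events; once this is in place, the corollary is an immediate specialization of Theorem \ref{thm:hitting_time_prob_standardized}.
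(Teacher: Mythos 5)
Your proposal is correct and follows the same route as the paper: identify $\{E_n=\mathrm{off}\}$ with $\{X_{T_n}=\ell\}$ and $\{E_n=\mathrm{on}\}$ with $\{X_{T_n}=u\}$, then reduce to the exit-position probabilities of Theorem \ref{thm:hitting_time_prob_standardized} via the substitution $x\mapsto z$, $(\ell,u)\mapsto(z-\tilde\theta^-,z+\tilde\theta^+)$. The paper's proof simply asserts this identification in one line, whereas you spell out the strong Markov/restart justification that the paper leaves implicit.
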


\begin{proof}
If $Z_{n-1}=z$ then the threshold interval is $(\ell,u)=(z-\tilde\theta^-,z+\tilde\theta^+)$ and
\[
\mathsf P(E_n=\mathrm{off}|Z_{n-1}=z)=\mathsf P(X_{\tau^x}^x=\ell)|_{x=z}
\]
and
\[
\mathsf P(E_n=\mathrm{on}|Z_{n-1}=z)=\mathsf P(X_{\tau^x}^x=u)|_{x=z}.
\]
This completes the proof.
\end{proof}

\begin{corollary}[expected inter-spike interval given $Z_{n-1}=z$]
\label{cor:ISI_cond_on_Z}
\[
\mathsf E(\operatorname{ISI}_n|Z_{n-1}=z)=\rho+\mathsf E\tau^z,
\]
where $\mathsf E\tau^x$ is given in Theorem \ref{thm:waiting_time_expectation_conditioned_on_in_threshold}.
\end{corollary}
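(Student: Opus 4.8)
The plan is to reduce the statement to a direct application of Definitions \ref{def:event_times}--\ref{def:ISI} together with the time-homogeneity of the standardized OU process $X_t$, and then to substitute the closed form from Theorem \ref{thm:waiting_time_expectation_conditioned_on_in_threshold}. First I would unwind the definition of the inter-spike interval: for $n\geq 2$, Definition \ref{def:ISI} gives $\operatorname{ISI}_n=T_n-T_{n-1}$, and Definition \ref{def:event_times} gives $T_n=T_{n-1}+\tau_n+\rho$, so that pathwise $\operatorname{ISI}_n=\rho+\tau_n$, where $\tau_n$ is the $n$th exit time. This is exactly the first coordinate of the transformation (\ref{eq:ISI_Zn_transformation}), so no distributional subtlety is hidden here; the refractory period $\rho$ is a deterministic constant and simply adds to the expectation.

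Next I would condition on $Z_{n-1}=z$ and identify the conditional law of $\tau_n$. By definition $Z_{n-1}=X_{T_{n-1}+\rho}$, and $\tau_n$ is the first time after $T_{n-1}+\rho$ that $X_t$ leaves the interval $(Z_{n-1}-\tilde\theta^-,Z_{n-1}+\tilde\theta^+)$. Since $T_{n-1}+\rho$ is a stopping time for $X_t$ and $X_t$ is a time-homogeneous (strong) Markov process, the elapsed exit time $\tau_n$ given $\{Z_{n-1}=z\}$ has the same distribution as $\tau^x$ for a fresh copy of the OU process started at $x=z$ and exiting the interval $(\ell,u)=(z-\tilde\theta^-,z+\tilde\theta^+)$ — this is precisely the substitution $x\mapsto z$, $(\ell,u)\mapsto(z-\tilde\theta^-,z+\tilde\theta^+)$, $\tau^x\mapsto\tau_n$ announced in Section \ref{sec:key_theoretical_contributions}. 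Taking expectations then yields
\[
\mathsf E(\operatorname{ISI}_n\mid Z_{n-1}=z)=\rho+\mathsf E(\tau_n\mid Z_{n-1}=z)=\rho+\mathsf E\tau^z,
\]
and substituting the expression for $\mathsf E\tau^x$ from Theorem \ref{thm:waiting_time_expectation_conditioned_on_in_threshold} (with $x=z$) completes the argument; in particular the result is functionally independent of $n$, consistent with the time-homogeneity claims made earlier.

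There is no genuine obstacle here: the only point requiring a word of care is the invocation of the strong Markov property at $T_{n-1}+\rho$ to pass from the absolute exit time appearing in Definition \ref{def:event_times} to the fresh relative exit time $\tau^z$ of the OU process — i.e.\ checking that $T_{n-1}+\rho$ is indeed a stopping time (it is, being built from $\tau_1,\dots,\tau_{n-1}$ and the constant $\rho$) and that $X_{T_{n-1}+\rho}=Z_{n-1}$ by construction. Everything else is the substitution of a previously derived closed form.
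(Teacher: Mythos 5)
Your proposal is correct and follows essentially the same route as the paper's proof: unwind $\operatorname{ISI}_n=\rho+\tau_n$ from Definitions \ref{def:event_times}--\ref{def:ISI}, identify $\mathsf E(\tau_n\mid Z_{n-1}=z)$ with $\mathsf E\tau^z$ from Theorem \ref{thm:waiting_time_expectation_conditioned_on_in_threshold}, and add the constant $\rho$. The paper's proof is terser and leaves the strong-Markov/time-homogeneity justification implicit, which you spell out; this is a welcome elaboration rather than a different argument.
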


\begin{proof}
Using the definition of $T_n$ in Definition \ref{def:event_times} we have
\[
\mathsf E(\operatorname{ISI}_n|Z_{n-1}=z)=\rho+\mathsf E(\tau_n|Z_{n-1}=z).
\]
The remaining conditional expected value is the expected exit time given the initial position $z$, which is given in Theorem \ref{thm:waiting_time_expectation_conditioned_on_in_threshold}. This completes the proof.
\end{proof}

\begin{corollary}[expected value of $Z_n$ given $Z_{n-1}=z$]
\label{cor:expectation_cond_on_Z}
\[
\mathsf E(Z_n|Z_{n-1}=z)=\alpha \left(z-\tilde\theta^-\mathsf P(X_{\tau^z}^z=\ell)
+\tilde\theta^+\mathsf P(X_{\tau^z}^z=u)\right),
\]
where $(\ell,u)=(z-\tilde\theta^-,z+\tilde\theta^+)$ and $\mathsf P(X_{\tau^x}^x=s)$ given in Theorem \ref{thm:hitting_time_prob_standardized}.
\end{corollary}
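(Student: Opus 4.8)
The plan is to compute the conditional expectation by conditioning on the intermediate exit position $X_{T_n}$ and applying the tower property. First I would recall from (\ref{eq:ISI_Zn_transformation})/(\ref{eq:Zn_cond_on_XTn}) that, by the Markov property of $\{X_t\}$ and the definition $Z_n = X_{T_n+\rho}$, one has $Z_n \mid X_{T_n}=x \sim \mathcal N(\alpha x, \sigma_\alpha^2)$ with $\alpha = e^{-\omega\rho}$, and moreover the additive Gaussian component is independent of $Z_{n-1}$. Consequently
\[
\mathsf E(Z_n \mid Z_{n-1}=z) = \mathsf E\bigl(\mathsf E(Z_n \mid X_{T_n}, Z_{n-1}=z)\,\big|\,Z_{n-1}=z\bigr) = \alpha\,\mathsf E(X_{T_n}\mid Z_{n-1}=z).
\]

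Next I would identify the conditional law of $X_{T_n}$ given $Z_{n-1}=z$. When $Z_{n-1}=z$ the relevant threshold interval is $(\ell,u) = (z-\tilde\theta^-,\,z+\tilde\theta^+)$, and by Definition \ref{def:event_polarity} the exit position $X_{T_n}$ takes only the two values $\ell$ and $u$. Using Theorem \ref{thm:hitting_time_prob_standardized} under the substitution $x\mapsto z$, $(\ell,u)\mapsto(z-\tilde\theta^-,z+\tilde\theta^+)$, $\tau^x\mapsto\tau_n$, $X_{\tau^x}^x\mapsto X_{T_n}$ described in Section \ref{sec:key_theoretical_contributions}, this is a two-point distribution with masses $\mathsf P(X_{\tau^z}^z=\ell)$ and $\mathsf P(X_{\tau^z}^z=u)$. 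Hence
\[
\mathsf E(X_{T_n}\mid Z_{n-1}=z) = (z-\tilde\theta^-)\,\mathsf P(X_{\tau^z}^z=\ell) + (z+\tilde\theta^+)\,\mathsf P(X_{\tau^z}^z=u).
\]

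Finally I would use $\mathsf P(X_{\tau^z}^z=\ell) + \mathsf P(X_{\tau^z}^z=u) = 1$ (Theorem \ref{thm:hitting_time_prob_standardized}) to collapse the $z$-terms, giving $\mathsf E(X_{T_n}\mid Z_{n-1}=z) = z - \tilde\theta^-\mathsf P(X_{\tau^z}^z=\ell) + \tilde\theta^+\mathsf P(X_{\tau^z}^z=u)$, and multiply through by $\alpha$ to obtain the claimed formula. This argument is almost entirely bookkeeping; the only point warranting care is the first step, namely justifying that $\mathsf E(Z_n \mid X_{T_n}, Z_{n-1}=z) = \alpha X_{T_n}$, which requires noting that the Gaussian increment relating $Z_n$ to $X_{T_n}$ (the $\sigma_\alpha\xi_n$ term) is independent of the history up to $Z_{n-1}$ — a direct consequence of the Markov property of $\{X_t\}$ and the independence of Wiener increments over the refractory window $[T_n, T_n+\rho]$.
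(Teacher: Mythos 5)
Your proposal is correct and follows essentially the same route as the paper: the paper computes $\mathsf E(Z_n|Z_{n-1}=z)=\int z^\prime p^{(1)}(z,z^\prime)\,\mathrm dz^\prime$ using the two-component Gaussian mixture of Lemma \ref{lem:trans_prob_Zn}, which is exactly your tower-property decomposition over the two-point law of $X_{T_n}$ (with conditional means $\alpha\ell$ and $\alpha u$) written out as a density integration. The final collapse via $\mathsf P(X_{\tau^z}^z=\ell)+\mathsf P(X_{\tau^z}^z=u)=1$ is the same in both.
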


\begin{proof}
The result follows from
\[
\mathsf E(Z_n|Z_{n-1}=z)=\int_{-\infty}^\infty z^\prime p^{(1)}(z,z^\prime)\,\mathrm dz^\prime,
\]
where $p^{(1)}(z,\cdot)$ is given in Lemma \ref{lem:trans_prob_Zn}. Performing the integration and simplifying gives the desired result.
\end{proof}


\bibliographystyle{unsrt}
\bibliography{mybibfile}

\end{document}